\begin{document}
\theorembodyfont{\normalfont}
\theoremheaderfont{\itshape}
\theoremseparator{:}
\newtheorem{theorem}{\indent Theorem}
\newtheorem{lemma}{\indent Lemma}
\newtheorem{assumption}{\indent Assumption}
\newtheorem{proposition}{\indent Proposition}
\newtheorem{corollary}{\indent Corollary}
\newtheorem{definition}{\indent Definition}
\newtheorem{remark}{\indent Remark}
\newtheorem{example}{\indent Example}
\newtheorem{proof}{\indent Proof}
\def\QEDclosed{\mbox{\rule[0pt]{1.5ex}{1.5ex}}} 
\def\QED{\QEDclosed} 
\def\endproof{\hspace*{\fill}~\QED\par\endtrivlist\unskip}
\newcommand\blfootnote[1]{%
\begingroup
\renewcommand\thefootnote{}\footnote{#1}%
\addtocounter{footnote}{-1}%
\endgroup
}
\title{Stabilizability of multi-agent systems under event-triggered controllers}
\author{Yinshuang Sun, Zhijian Ji, Yungang Liu, and Chong Lin
\thanks{E-mail address: jizhijian@pku.org.cn. (Zhjian Ji)
This work was supported by the National Natural Science Foundation of China (Grant Nos. 61873136 and 62033007), Taishan Scholars Climbing Program of Shandong Province of China and Taishan Scholars Project of Shandong Province of China (No. ts20190930).
Yinshuang Sun, Zhijian Ji (Corresponding author) and Chong Lin are with Institute of Complexity Science, College of Automation, Qingdao University, and Shandong Key Laboratory of Industrial Control Technology. Yungang Liu is with School of Control Science and Engineering, Shandong University.}}\maketitle
\begin{abstract}
In view of the problems of large consumption of communication and computing resources in the control process, this note studies a fundamental property for a class of multi-agent systems under event-triggered strategy: the S-stabilizability of a group of multi-agent systems with general linear dynamics under weakly connected directed topology.
The results indicate that the S-stabilizability can be described in some way that the stabilizability region and feedback gain can evaluate the performance of the protocol.
Firstly, a new distributed event-triggered protocol is proposed.
Under this protocol, a kind of hybrid static and dynamic event-triggered strategy are presented, respectively.
In particular, by using Lyapunov stability theory and graph partition tool, it is proved that the proposed
event-triggered control strategy can guarantee the closed-loop system achieve S-stabilizability effectively, if at least one vertex in each iSCC cell receives information from the leader, which reflects the ability of distributed control law.
Further, we demonstrate that the stabilizability can be realized if the initial system matrix $A$ is Hurwitz.
Moreover, it is confirmed that the designed static event-triggered condition is a limit case of dynamic event condition and can guarantee Zeno-free behavior.
Finally, the validity of the theoretical results is proved by numerical simulation.
\end{abstract}
%
%
%
%
\section{Introduction}
{I}{n} recent years, the distributed cooperative control \cite{Lix} \cite{Xiao}of multi-agent system has become an issue of widespread concern in control engineering, biology, physics and other disciplines because of its strong practical application background. For example, controllability \cite{direct} --\cite{almost} and consensus \cite{Necsu} \cite{Ren}.
As an effective method to deal with formation control problems in recent years, controllability has become an essential direction of multi-agent system research.
The controllability of multi-agent system was first proposed by Tanner \cite{Tanner}.
It should be noted that the concept of controllability essentially includes the possibility of executing any position at any time, which leads to some shortcomings of multi-agent systems in corresponding application fields.
Therefore, some scholars have raised a question about whether we ask too much.
This inspired the naming of ``stabilizability" of multi-agent systems with single integrator \cite{kim}.
In recent years, the research on the stabilizability of multi-agent systems has become more and more extensive.
\cite{xianzhu} \cite{Ys} extended model to general linear multi-agent systems.
However, the above studies on the stabilizability of multi-agent systems focus on the network structure and local information feedback including neighbor state feedback and self-state feedback.

Actually, the design of neighbor state feedback gain is only to adjust the interconnect gain, which has been applied to plague control in some power grids \cite{Sun}.
An interesting theoretical question is whether the whole network can only be stabilized by adjusting the interconnect gain.
For an interconnected continuous system consisting of two linear subsystems, Duan et al. solved this problem completely \cite{Huang}, where the designed interaction gain is called harmonic control.
However, how to design the interaction gain to stabilize a complex network composed of more than two subsystems is still an open problem.
In 2018, Liu et al. studied the stabilizability of heterogeneous multi-agent systems under harmonic control \cite{xianzhuLiu}.

In practical applications, the energy of the agent itself and the bandwidth of communication channel are limited.
In general, measurement, communication and control protocol updates in control tasks are performed periodically, i.e. the periodic sample control method \cite{double} \cite{intermittent}.
In order to guarantee the performance of all execution points, the sampling time constant usually takes a conservative value, which usually results in a waste of communication and computing resources.
With the deepening of research and solving the waste of computing and communication resources caused by the periodic execution of tasks by the controller in practical application, the multi-agent control strategy based on event-trigger was proposed and had attracted wide attention \cite{ODemir} \cite{KaienLiu}.
Under the event-triggered control strategy, control tasks are only executed on demand, so as to save system resources including the computing power, communication capability, and energy reserve of the agent.
At present, many meaningful research results have been achieved on the consensus of multi-agent systems based on event-triggered control.
In \cite{DimarogonasDV}, event-triggered control protocol and condition related to the state of the agent for first-order systems were designed. In addition, it is confirmed that there is no Zeno behavior. The results showed that the event-triggered control has the expected performance and reduces the number of samples.
In \cite{Liw}, the consensus of second-order multi-agent systems under event-triggered conditions was considered.
The event-triggered control of linear multi-agent systems and nonlinear systems were investigated in \cite{FzLi}, respectively.

Under directed topology, although some literatures considered the consensus of multi-agent system under event-triggered condition, as far as the author knows, the stabilizability has hardly been discussed because system matrix $A$ exists divergence.
Moreover, the influence of topology structure on the stabilizability under event-triggered controller has not been considered fully.
Based on the above challenging problems,
this note focuses on a systematic study about stabilizability of general linear multi-agent systems under event-triggered controller in directed topology, a kind of hybrid static and dynamic event-triggered conditions based on system state are given. In addition, from the designed event-triggered conditions, it can be seen that each agent does not need to monitor the state of neighboring agents continuously, hence this reduces the frequency of driving updates and communication among agents.
Based on Lyapunov stability theory, it is proved that the proposed control strategy and designed event-triggered condition can guarantee the closed-loop system achieve stabilizability effectively, and there is no Zeno behavior.
The main contributions of this work are stated as follows. (1) The definition of S-stabilizability is proposed, and a new trigger protocol that can guarantee S-stabilizability is designed; (2) It is proved that the stabilizability is a special case of S-stabilizability. If the initial system matrix $A$ is stable, the closed-loop system can realize stabilization, that is, $\mathop {\lim }\limits_{t \to \infty } {x_i}\left( t \right) = 0$; (3) The influence of topology structure on the stabilizability of the system is revealed from the perspective of graph division; (4) The designed dynamic event-triggered condition has obvious advantages than static trigger condition in reducing the number of events.

The structure of this note is as follows:
In Section \uppercase\expandafter{\romannumeral2}, we briefly introduce the concept and problem description of graph theory.
In Section \uppercase\expandafter{\romannumeral3}, the event-triggered strategy and two kinds of event-triggered conditions are established.
the accuracy of the theorem is verified by simulation experiments in Section \uppercase\expandafter{\romannumeral4}.
Section \uppercase\expandafter{\romannumeral5} summarizes this note.

\section{Preliminaries and problem description}
\subsection{Theory of graph}
A directed graph is represented by $\mathcal{G}{\rm{ = }}\left( {\mathcal{V},\mathcal{E},\mathcal{A}} \right)$, where $\mathcal{V} = \left\{ {1, \cdots ,N} \right\}$ is the set of vertex,
$\mathcal{F} = \left\{ {1, \ldots ,m} \right\}$ and $\mathcal{R} = \left\{ m+1, \ldots , N \right\}$ represent the set of followers and leaders, respectively.
$ \mathcal{E} \subseteq \mathcal{ V} \times \mathcal{ V}$ is the edge set, and $\mathcal{A} = \left( {{a_{ij}}} \right) \in {\mathbb{R}^{N \times N}}$ denotes adjacency matrix of $\mathcal{G}$. ${a_{ij}}> 0$ represents $\left( {i,j} \right) \in \mathcal{E}$.
Assume that $\left( {i,i} \right) \notin \mathcal{E}$, hence ${a_{ii}} \ne 0$.
Note that ${a_{ij}} >  0$ indicates agent $i $ can receive information from agent $j$, in which $i$ is called child vertex and $j$ is called parent vertex.
Here we choose ${\mathcal{N}_i}{\rm{ = }}\left\{ {j \in \mathcal{V},\left( {i,j} \right) \in \mathcal{E},j \ne i} \right\}$ as the set of neighbors of vertex ${i}$ in $\mathcal{V}$.
For a given graph $\mathcal{G}$, with adjacency matrix $\mathcal{A}$, the Laplacian matrix is $\mathcal{L} =\mathcal{D} -\mathcal{A}$, where $\mathcal{D}$ is a diagonal matrix with diagonal elements ${d_{ii}} = \sum_{j \in {\mathcal{N}_i}} {{a_{ij}}}$.
Therefore, the element in $\mathcal{L}$ is \[{\mathcal{L}_{ik}} = \left\{ {\begin{array}{*{20}{c}}
{\sum\limits_{j \in {\mathcal{N}_i}} {{a_{ij}}} ,}&{k = i}\\
{ - {a_{ik}},}&{k \ne i.}
\end{array}} \right.\]
A directed path from vertex $i$ to vertex $j$ is a sequence of ordered edges in the form of
$\left( {{s},{{s + 1}}} \right) \in \mathcal{E}$, where $s = i, \ldots ,j - 1$.
A weak path refers to the existence of $\left( {{s},{{s + 1}}} \right) \in \mathcal{E}$ or $\left( {{{s+1}},{s}} \right) \in \mathcal{E}$.
The graph $\mathcal{G}$ contains a directed spanning tree, if there exists a root vertex so that there exists a directed path from the root vertex to any other vertices.
If there exists a weak path between every pair of distinct vertices, then graph $\mathcal{ G}$ is said to be weakly connected.
\subsection{Basic Definitions and Lemmas}
\begin{lemma}\cite{Ren}\label{yin1}
Consider a weighted directed topology graph $\mathcal{G}$,
$\mathcal{L}$ contains a simple zero eigenvalue, and all the non-zero eigenvalues are with positive real parts if and only if graph $\mathcal{G}$ contains a directed spanning tree.
Without losing generality, we sort the eigenvalues of $\mathcal{L}$ as
\[0 = {\lambda _1}\left( \mathcal{L} \right) \le {\mathop{\rm Re}\nolimits} \left( {{\lambda _2}\left( \mathcal{L} \right)} \right) \le  \cdots  \le {\mathop{\rm Re}\nolimits} \left( {{\lambda _N}\left( \mathcal{L} \right)} \right)\]
\end{lemma}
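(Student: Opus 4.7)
The plan is to combine Gersgorin's disc theorem with a graph-theoretic analysis of $\ker\mathcal{L}$. Since $\mathcal{L}$ has diagonal entries $d_{ii}=\sum_{j\ne i}a_{ij}\ge 0$ and off-diagonal entries $-a_{ij}\le 0$ whose absolute values sum to exactly $d_{ii}$ in each row, every Gersgorin disc of $\mathcal{L}$ is centered at $d_{ii}$ with radius $d_{ii}$ and therefore lies in the closed right half-plane. This already establishes the ordering $0\le\mathrm{Re}(\lambda_2)\le\cdots\le\mathrm{Re}(\lambda_N)$ claimed in the statement, and the row-sum identity $\mathcal{L}\mathbf{1}=0$ shows that $0$ is always an eigenvalue. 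The real content of the lemma is therefore the equivalence between algebraic simplicity of this zero eigenvalue and the existence of a directed spanning tree.

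For the direction ``spanning tree $\Rightarrow$ simple zero'', I would fix a root $r$ and study the principal submatrix $\mathcal{L}_r$ obtained by deleting row and column $r$. This matrix has nonpositive off-diagonals and weak row diagonal dominance, making it an $M$-matrix candidate; the spanning-tree assumption is exactly the combinatorial input needed to promote weak dominance to strict dominance in at least one row of every irreducible diagonal block. By classical $M$-matrix / Perron--Frobenius theory this forces $\mathcal{L}_r$ to be nonsingular. Expanding $\det(\lambda I-\mathcal{L})$ around $\lambda=0$ then gives a nonzero linear coefficient $\pm\det\mathcal{L}_r$, so $0$ is an algebraically simple root of the characteristic polynomial.

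For the converse I would argue contrapositively using the condensation of $\mathcal{G}$ into strongly connected components. The absence of a directed spanning tree forces the existence of at least two extremal SCCs in the condensation DAG that play symmetric roles with respect to $\mathcal{L}$ -- each receives no information from outside itself, in the convention fixed by the construction of $\mathcal{L}$ from $\mathcal{A}$. Restricting $\mathcal{L}$ to each such component gives a bona fide sub-Laplacian with its own constant null vector; one then extends each indicator vector to the remaining vertices by solving a triangular system whose diagonal blocks are nonsingular $M$-matrices (by the argument above applied to the descendant portion). This produces two linearly independent elements of $\ker\mathcal{L}$, so the zero eigenvalue cannot be simple.

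The main obstacle is the accurate translation between graph structure and matrix structure at the two key steps: identifying which rows of $\mathcal{L}_r$ inherit a strict mass deficit under the spanning-tree hypothesis, and verifying that the two extremal SCCs in the no-spanning-tree case really do generate independent kernel vectors after extension. Both steps are standard applications of $M$-matrix theory once one is careful about the edge-direction convention encoded in $\mathcal{A}$; everything else is routine bookkeeping supported by the Gersgorin placement already established in the first paragraph.
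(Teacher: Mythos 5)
The paper offers no proof of this lemma at all: it is imported verbatim from Ren and Beard \cite{Ren}, so there is no in-paper argument to compare against, and your outline must be judged on its own. The overall architecture (Gersgorin localization, an $M$-matrix analysis of a deleted-root submatrix for the forward direction, and two closed strongly connected components generating independent kernel vectors for the converse) is the standard and correct route. However, the forward direction contains a concrete error: the coefficient of $\lambda$ in $\det\left( {\lambda I - \mathcal{L}} \right)$ is, up to sign, the \emph{sum} $\sum\nolimits_{r = 1}^N {\det {\mathcal{L}_r}} $ of all principal minors of order $N-1$, not the single minor $\det {\mathcal{L}_r}$ for your chosen root. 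Proving $\det {\mathcal{L}_r} \ne 0$ for one $r$ therefore does not by itself make the linear coefficient nonzero, since a priori the other minors could cancel it. The repair is standard but must be stated: writing $\mathcal{L} = sI - P$ with $s = \max_i d_{ii}$ and $P = sI - \mathcal{D} + \mathcal{A} \ge 0$, each row sum of $P$ equals $s$, so $\rho(P) = s$ and $\mathcal{L}$ is a singular $M$-matrix; hence every principal minor of $\mathcal{L}$ is nonnegative, and one strictly positive minor forces the whole sum to be positive. (Equivalently, one can invoke the all-minors matrix-tree theorem, which identifies that sum with a weighted count of spanning trees.)

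A second, smaller omission: the lemma asserts that every nonzero eigenvalue has \emph{strictly} positive real part, whereas your Gersgorin step only delivers ${\mathop{\rm Re}\nolimits}(\lambda) \ge 0$. You need the additional observation that each disc $\left| {z - {d_{ii}}} \right| \le {d_{ii}}$ meets the imaginary axis only at the origin, so any eigenvalue with zero real part must equal $0$; combined with the simplicity of the zero eigenvalue this yields the strict inequality for all remaining eigenvalues. You should also pin down the edge-direction convention explicitly before deleting the root, since in this paper $a_{ij}>0$ means $i$ receives from $j$, which reverses the reachability condition relative to the more common convention. With these repairs the argument goes through.
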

\begin{lemma}\cite{kim}\label{yin2}
Matrix $A = \left[ {{a_{ik}}} \right] \in {\mathbb{R}^{n \times n}}$ is strictly diagonally dominant if it satisfies:
\\(1) ${\kern 1pt} \left| {{a_{ii}}} \right| \ge \sum\limits_{k = 1,k \ne i}^n {\left| {{a_{ik}}} \right|} $ \ for all $i = 1, \cdots ,n$;
\\(2) ${\kern 1pt} {\kern 1pt} \left| {{a_{ii}}} \right| > \sum\limits_{k = 1,k \ne i}^n {\left| {{a_{ik}}} \right|} $ for at least one $i$.
\end{lemma}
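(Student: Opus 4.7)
The plan is to treat the lemma as a characterization---that (1)--(2) is the working definition of strict diagonal dominance adopted in the paper (following \cite{kim})---so the proof amounts to unpacking what each clause contributes and verifying that no stronger textbook version is needed for the later applications. The right-hand side (``is strictly diagonally dominant'') is being named by the left-hand side, so the task is verification rather than deduction from a prior notion.

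First I would contrast (1)--(2) with the textbook version ``$|a_{ii}|>\sum_{k\ne i}|a_{ik}|$ for every $i$'': clause~(1) is universal weak dominance, which pins every Ger\v{s}gorin disc inside $\{z:|z-a_{ii}|\le |a_{ii}|\}$, while clause~(2) supplies one strict row that prevents all discs from simultaneously touching the origin from outside. I would then verify that the lemma is non-vacuous on the class of matrices the paper actually feeds into it: blocks of $\mathcal{L}$ derived from a weighted directed topology have $|a_{ii}|=\sum_{j\in\mathcal{N}_i}a_{ij}$ by construction, so (1) holds row-by-row with equality, and (2) activates precisely at those rows whose vertex receives a positive weight from a leader (the extra positive term added to $a_{ii}$ is uncompensated on the off-diagonal). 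This matches the structural hypothesis in the abstract---``at least one vertex in each iSCC cell receives information from the leader''---and reduces to a direct bookkeeping check on the entry definitions laid out in Section~II.A.

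The main obstacle is conceptual rather than computational: (1)--(2) alone do \emph{not} force nonsingularity of $A$, so each downstream invocation of this lemma must be paired with an irreducibility or iSCC-connectivity hypothesis supplied by the graph-partition tool before classical consequences such as Ger\v{s}gorin or Levy--Desplanques can be invoked. I would close the argument by flagging this pairing explicitly, so that the weaker characterization recorded here remains compatible with the stronger conclusions (Hurwitz-type stability of Laplacian-related blocks) that the paper draws in later sections.
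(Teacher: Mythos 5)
Your reading is the right one: the paper supplies no proof of this statement at all --- it is imported verbatim from \cite{kim} as the working definition of strict diagonal dominance used throughout, so the task really is verification of consistency rather than deduction, and there is no in-paper argument to compare yours against. Your two substantive observations --- that the Laplacian-derived blocks satisfy clause (1) with row-by-row equality and activate clause (2) exactly at the leader-informed rows, and that (1)--(2) alone do not force nonsingularity without the irreducibility supplied by the iSCC (strong-connectivity) structure --- are both correct, and they are precisely the implicit ingredients the paper leans on in Remark~1 and in the block decomposition of $\mathcal{L}$ following Definition~1, where Lemmas~1 and~2 are invoked jointly to conclude that the diagonal blocks have spectra in the closed (respectively open) right half-plane.
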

\begin{definition}
[independent strongly connected component]
An independent Strongly Connected Component (iSCC) of graph $\mathcal{G}{\rm{ = }}\left( {\mathcal{V},\mathcal{E},\mathcal{A}} \right)$ is the largest strongly connected induced subgraph ${\mathord\mathcal{{\buildrel{\lower3pt\hbox{$\scriptscriptstyle\smile$}}
\over L} }_1}{\rm{ = }}\left( {{\mathord\mathcal{{\buildrel{\lower3pt\hbox{$\scriptscriptstyle\smile$}}
\over V} }_1},{\mathord\mathcal{{\buildrel{\lower3pt\hbox{$\scriptscriptstyle\smile$}}
\over E} }_1},{\mathord\mathcal{{\buildrel{\lower3pt\hbox{$\scriptscriptstyle\smile$}}
\over A} }_1}} \right)$, and for any ${v_j} \in \mathcal{V}\backslash {\mathord\mathcal{{\buildrel{\lower3pt\hbox{$\scriptscriptstyle\smile$}}
\over V} }_1},{v_i} \in {\mathord\mathcal{{\buildrel{\lower3pt\hbox{$\scriptscriptstyle\smile$}}
\over V} }_1}$ satisfies $\left( {{v_i},{v_j}} \right) \notin \mathcal{E}$.
\end{definition}
Under this partition, $\mathcal{L}$ can be written as follows:
\[\mathcal{L} = \left[ {\begin{array}{*{20}{c}}
{{\mathord\mathcal{{\buildrel{\lower3pt\hbox{$\scriptscriptstyle\smile$}}
\over L} }_{11}}}&0& \cdots &0&0\\
0&{{\mathord\mathcal{{\buildrel{\lower3pt\hbox{$\scriptscriptstyle\smile$}}
\over L} }_{22}}}& \cdots &0&0\\
 \vdots & \vdots & \ddots & \vdots & \vdots \\
0&0& \cdots &{{\mathord\mathcal{{\buildrel{\lower3pt\hbox{$\scriptscriptstyle\smile$}}
\over L} }_{cc}}}&0\\
{{\mathcal{L}_{c + 1,1}}}&{{\mathcal{L}_{c + 1,2}}}& \cdots &{{\mathcal{L}_{c + 1,c}}}&{{\mathord\mathcal{{\buildrel{\lower3pt\hbox{$\scriptscriptstyle\smile$}}
\over L} }_{c + 1,c + 1}}}
\end{array}} \right]\]
By Lemmas \ref{yin1}, \ref{yin2}, ${\mathord\mathcal{\buildrel{\lower3pt\hbox{$\scriptscriptstyle\smile$}}
\over L} _{jj}} \in {\mathbb{R}^{\left| {{N_j}} \right| \times \left| {{N_j}} \right|}}\left( {j = 1, \cdots ,c} \right)$ has a zero eigenvalue, and the other non-zero eigenvalues have positive real parts;
${{\mathord\mathcal{{\buildrel{\lower3pt\hbox{$\scriptscriptstyle\smile$}}
\over L} }_{c + 1,c + 1}}}$ is Hurwitz matrix.
\begin{lemma}\cite{xianzhu}
Suppose that directed topology $\mathcal{G}$ is weakly connected and $\mathcal{L}$ is the Laplacian matrix of $\mathcal{G}$. Then $rank\left( \mathcal{L} \right) = N - c$ holds if and only if $\mathcal{G}$ contains $c$ iSCC cells.
\end{lemma}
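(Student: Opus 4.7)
The plan is to count $\dim\ker(\mathcal{L})$ directly from the block-lower-triangular form of $\mathcal{L}$ displayed immediately above the lemma, and then apply the rank--nullity theorem. Because the iSCC decomposition is intrinsic to $\mathcal{G}$, the graph admits a unique number $c'$ of iSCC cells, so it suffices to prove $\mathrm{rank}(\mathcal{L}) = N - c'$; the iff then reduces to identifying $c = c'$.

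First I would quote the two spectral facts already stated between Lemma~2 and the present lemma: each diagonal block $\breve{L}_{jj}$ for $j = 1,\ldots,c'$ is the Laplacian of a strongly connected induced subgraph, hence by Lemma~1 has a one-dimensional kernel spanned by $\mathbf{1}_{|N_j|}$; the final block $\breve{L}_{c'+1,c'+1}$ is Hurwitz and therefore invertible. These are the only tools I need.

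Second, I would split a generic $x \in \ker(\mathcal{L})$ conformably into blocks $x_1,\ldots,x_{c'},x_{c'+1}$. The block-lower-triangular structure decouples the equation $\mathcal{L}x = 0$: the first $c'$ block rows force $x_j = \alpha_j \mathbf{1}_{|N_j|}$ for free scalars $\alpha_j$, while the bottom block row, together with invertibility of $\breve{L}_{c'+1,c'+1}$, uniquely determines $x_{c'+1}$ from the chosen $\alpha_j$'s. This exhibits a linear bijection from $\mathbb{R}^{c'}$ onto $\ker(\mathcal{L})$, so $\dim\ker(\mathcal{L}) = c'$ and $\mathrm{rank}(\mathcal{L}) = N - c'$.

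The main obstacle I anticipate is not the algebra but the structural justification of the block form itself. I must verify that for any weakly connected $\mathcal{G}$ the iSCC partition really yields the displayed block-triangular $\mathcal{L}$ with the precise spectral properties of the diagonal blocks. The delicate point is that every vertex outside the $c'$ iSCCs must reach some iSCC via a directed walk, so that $\breve{L}_{c'+1,c'+1}$ inherits strict diagonal dominance in the sense of Lemma~2 and is consequently non-singular and Hurwitz. Once this is pinned down, the kernel count above is routine and the proof is complete.
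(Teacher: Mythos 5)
The paper never proves this lemma: it is imported from the cited reference \cite{xianzhu} and used as a black box, so there is no in-paper argument to compare against. Your derivation is a legitimate self-contained proof, and it uses precisely the machinery the paper assembles just before the lemma statement: the block-lower-triangular form of $\mathcal{L}$ under the iSCC partition, the simple zero eigenvalue (with kernel spanned by the all-ones vector) of each diagonal iSCC block, and the invertibility of the trailing block. The kernel count $\dim\ker(\mathcal{L})=c$ followed by rank--nullity is exactly right, as is your observation that the ``if and only if'' degenerates to a single equality because the number of iSCCs is intrinsic to $\mathcal{G}$. The block decoupling is also sound: since an iSCC by definition receives no edges from outside itself, each of the first $c$ block rows of $\mathcal{L}x=0$ reads $\breve{\mathcal{L}}_{jj}x_j=0$ with no cross terms, and the bottom block row then determines $x_{c+1}$ uniquely once $\breve{\mathcal{L}}_{c+1,c+1}$ is known to be nonsingular.

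That last point is the one obligation you flag but do not fully discharge, and it needs slightly more than Lemma~2 as stated: a matrix satisfying conditions (1) and (2) of Lemma~2 (weak dominance in every row, strict dominance in at least one) can still be singular --- for instance the direct sum of a $2\times 2$ singular Laplacian block with the $1\times 1$ block $(1)$. What saves you is the \emph{chained} (irreducibly) diagonally dominant version: every non-iSCC vertex must be linked, by a walk that follows parents (nonzero off-diagonal entries of $\breve{\mathcal{L}}_{c+1,c+1}$), to a row with strict dominance, i.e., to a vertex having a parent inside some iSCC. This is exactly the reachability property you describe --- in the condensation of $\mathcal{G}$ into strongly connected components the iSCCs are the source components, and every other component has an ancestor among them, since otherwise it would itself be an iSCC --- so the gap is fillable; just invoke Taussky's theorem or the theory of nonsingular M-matrices rather than Lemma~2 alone. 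With that single step tightened, your argument is complete.
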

\begin{definition}
Denote $X = \left\{ {{x_{m + 1}}, \ldots ,{x_N}} \right\}$ as the status of leaders. Then the convex hull containing all the points of $X$ can be described as
\[S: = \left\{ {\sum\nolimits_{j = m + 1}^N {{t_j}{x_j}} |{x_j} \in X,\sum\nolimits_{j = m + 1}^N {{t_j}}  = 1,{t_j} \in \left[ {0,1} \right]} \right\}\]
\end{definition}
\begin{lemma}\cite{CNowzari}\label{lemmays}
[Young's inequality]
Given $x,y \in {R^n}$, for $\upsilon  > 0$, $2{x^T}y \le \upsilon {x^T}x + \frac{{y^T}y}{\upsilon }$.
\end{lemma}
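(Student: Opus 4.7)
The plan is to reduce the statement to the non-negativity of a squared Euclidean norm, which is the standard route for this weighted form of Young's inequality. Specifically, for any $u, v \in \mathbb{R}^n$, one has $(u-v)^T(u-v) \ge 0$, and expanding this yields the symmetric bound $2 u^T v \le u^T u + v^T v$. The only task is to introduce the positive parameter $\upsilon$ by the right choice of $u$ and $v$.

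First I would set $u = \sqrt{\upsilon}\, x$ and $v = \frac{1}{\sqrt{\upsilon}}\, y$, which is well-defined precisely because $\upsilon > 0$. Then I would compute the three inner products appearing in the symmetric bound: the cross term gives $u^T v = x^T y$ (the factors $\sqrt{\upsilon}$ and $1/\sqrt{\upsilon}$ cancel), while $u^T u = \upsilon\, x^T x$ and $v^T v = \frac{1}{\upsilon}\, y^T y$. Substituting these three identities into $2 u^T v \le u^T u + v^T v$ immediately produces
\[
2 x^T y \;\le\; \upsilon\, x^T x + \frac{y^T y}{\upsilon},
\]
which is the claim.

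There is no genuine obstacle in this argument; the only design choice is the asymmetric rescaling by $\sqrt{\upsilon}$, which is forced once one asks that the cross term remain $x^T y$ while the quadratic terms acquire the weights $\upsilon$ and $1/\upsilon$. I would also note, as a sanity check, that equality in $(u-v)^T(u-v) \ge 0$ corresponds to $\sqrt{\upsilon}\, x = \frac{1}{\sqrt{\upsilon}}\, y$, i.e.\ $y = \upsilon\, x$, which is the standard equality case of Young's inequality in this quadratic form.
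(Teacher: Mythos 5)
Your proof is correct. The paper states this lemma as a cited result from the literature and gives no proof of its own, so there is nothing to compare against; your argument --- expanding $(u-v)^T(u-v)\ge 0$ with the rescaling $u=\sqrt{\upsilon}\,x$, $v=\frac{1}{\sqrt{\upsilon}}\,y$ --- is the standard and complete derivation, including the correct identification of the equality case $y=\upsilon\,x$.
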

\begin{remark}
Given a directed topology $\mathcal{G}$, it is assumed that follower network $\mathcal{G}_\mathcal{F}$ is weakly connected and contains $c\left( {c \ge 1} \right)$ distinct iSCC cells.
Matrix $H = {\mathcal{L}_\mathcal{F}} + \sum\nolimits_{j = m + 1}^N {{D_j}} $ is strictly diagonally dominant if for each iSCC cell, there exists at least one vertex $i\left(i \in \mathcal{F} \right)$ such that ${d_{ij}} > 0\left(j \in \mathcal{R}\right)$.
And all eigenvalues lie in the open right-half complex-plane ${C_{ > 0}}$, where ${D_j} = diag\left\{ {{d_{1j}}, \cdots ,{d_{mj}}} \right\}$, and ${\mathcal{L}_\mathcal{F}}$ is the Laplacian matrix corresponding to the follower network $\mathcal{G}_\mathcal{F}$.
\end{remark}
\begin{lemma}\cite{ZLi}\label{lemma3}
[Comparison principle]
Consider a differential equation $\frac{{du}}{{dt}} = f\left( {t,u} \right),u\left( {{t_0}} \right) = {u_0}$, where $t > 0$, $f\left( {t,u} \right)$ is continuous and satisfies the local Lipschitz condition in $t$. Let $\left[ {{t_0},T} \right)$ be the maximum existence interval of the solution $u$, where $T$ can be infinite. If, for any $t \in \left[ {{t_0},T} \right),v = v\left( t \right)$ satisfies
\[\frac{{dv}}{{dt}} \le f\left( {t,v} \right),v\left( {{t_0}} \right) \le {u_0},\] then $v\left( t \right) \le u\left( t \right),t \in \left[ {{t_0},T} \right)$.
\end{lemma}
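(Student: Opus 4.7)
The plan is to use the classical perturbation trick. Introduce the perturbed initial-value problem $\frac{du_\varepsilon}{dt}=f(t,u_\varepsilon)+\varepsilon$, $u_\varepsilon(t_0)=u_0+\varepsilon$ for $\varepsilon>0$. Since $f$ is continuous and locally Lipschitz in the second argument, this perturbed equation has a unique solution on some subinterval $[t_0,T_\varepsilon)\subseteq[t_0,T)$, and by standard continuous-dependence-on-parameters results $u_\varepsilon(t)\to u(t)$ uniformly on every compact subinterval of $[t_0,T)$ as $\varepsilon\downarrow 0$. I would therefore first show the strict inequality $v(t)<u_\varepsilon(t)$ on $[t_0,T)$, and then recover the claim by sending $\varepsilon\to 0^+$.

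The heart of the argument is showing $v(t)<u_\varepsilon(t)$. At $t=t_0$ we have $v(t_0)\le u_0<u_0+\varepsilon=u_\varepsilon(t_0)$, so by continuity the set $S_\varepsilon=\{t\in[t_0,T):v(t)\ge u_\varepsilon(t)\}$ is empty on some right-neighbourhood of $t_0$. Suppose for contradiction $S_\varepsilon\neq\varnothing$ and let $t_1=\inf S_\varepsilon>t_0$. Then by continuity $v(t_1)=u_\varepsilon(t_1)$ and $v(t)<u_\varepsilon(t)$ on $[t_0,t_1)$. Using the differential inequality and the perturbed equation,
\[
\frac{dv}{dt}(t_1)\le f(t_1,v(t_1))=f(t_1,u_\varepsilon(t_1))<f(t_1,u_\varepsilon(t_1))+\varepsilon=\frac{du_\varepsilon}{dt}(t_1),
\]
so $v$ grows strictly slower than $u_\varepsilon$ at $t_1$, which contradicts $v$ catching up with $u_\varepsilon$ from below at $t_1$. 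Hence $v(t)<u_\varepsilon(t)$ throughout $[t_0,T)$.

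Finally, fix any $t\in[t_0,T)$ and a compact subinterval containing it. On such an interval $u_\varepsilon\to u$ as $\varepsilon\to 0^+$ by the continuous dependence theorem, so passing to the limit in $v(t)<u_\varepsilon(t)$ gives $v(t)\le u(t)$, as desired.

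The part I expect to be most delicate is the limit step: one must ensure that $u_\varepsilon$ actually exists on the whole of $[t_0,T)$ for all sufficiently small $\varepsilon$, and that its convergence to $u$ is uniform on compact subsets. This is where the hypothesis that $f$ is continuous and locally Lipschitz in $u$ is essential; without it, $u_\varepsilon$ could blow up sooner than $u$ or fail to converge. A secondary technical issue is that the strict-inequality contradiction at $t_1$ implicitly assumes $v$ is (right-)differentiable there; if one only has the differential inequality in a Dini-derivative sense, the same argument goes through with $D^+v$ in place of $dv/dt$, but under the smoothness hypotheses stated in the lemma no such extension is needed.
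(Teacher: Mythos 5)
The paper does not prove this lemma at all: it is imported verbatim from the cited reference \cite{ZLi} and used as a black box, so there is no in-paper argument to compare yours against. Judged on its own, your proof is the standard and correct one (the perturbation argument found in, e.g., Khalil's treatment of the comparison lemma): perturb to $u_\varepsilon' = f(t,u_\varepsilon)+\varepsilon$, $u_\varepsilon(t_0)=u_0+\varepsilon$, rule out a first crossing time $t_1$ by noting that $w=v-u_\varepsilon$ would have to satisfy $w(t_1)=0$, $w<0$ on $[t_0,t_1)$, yet $w'(t_1)\le f(t_1,v(t_1))-f(t_1,u_\varepsilon(t_1))-\varepsilon=-\varepsilon<0$, which forces $w>0$ just to the left of $t_1$; then let $\varepsilon\downarrow 0$ using continuous dependence on compact subintervals. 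You correctly flag the two genuine delicacies (existence of $u_\varepsilon$ on each compact subinterval for small $\varepsilon$, and the Dini-derivative version when $v$ is not differentiable), and both are handled by standard ODE theory under the stated hypotheses. One small point worth fixing: the lemma as printed says $f$ is locally Lipschitz ``in $t$,'' which is surely a typo for ``in $u$'' -- your argument implicitly uses Lipschitz continuity in the second argument (for uniqueness and continuous dependence), which is the correct reading.
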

\subsection{Problem description}
Consider a leader-follower network consisting of $N$ agents, the dynamics of agent $i$ is described as
\begin{equation}\label{xtfc1}
\begin{split}
\left\{ {\begin{array}{*{20}{c}}
{{{\dot x}_i}\left( t \right) = A{x_i}\left( t \right),}&{i \in {\cal R}}\\
{{{\dot x}_i}\left( t \right) = A{x_i}\left( t \right) + B{u_i}\left( t \right),}&{i \in {\cal F}}
\end{array}} \right.
\end{split}
\end{equation}
where ${x_i}\left( t \right) \in \mathbb{R}^n$ and ${u_i}\left( t \right)\in \mathbb{R}^p$ represent the state and inter-agents control input of agent $i$, respectively.
$A\in \mathbb{R}^{n \times n}$ is the system matrix and $B\in \mathbb{R}^{n \times p}$ is the input matrix.
$\mathcal{F} = \left\{ {1, \ldots ,m} \right\}$ represents the set of followers,
$\mathcal{R} = \left\{ m+1, \ldots , N \right\}$ represents the set of leaders.
\begin{remark}
For undirected graphs, multi-agent systems \eqref{xtfc1} can only achieve consensus under protocol
${u_i}\left( t \right) = K\sum\nolimits_{j \in {N_i}} {{a_{ij}}\left( {{x_j} - {x_i}} \right)} $.
Under a directed topology, to the author's knowledge, there is almost no literature that can guarantee the system state converges to 0.
 Therefore, we relax the objective. Accordingly, we propose the definition of S-stabilizability below.
\end{remark}
\begin{definition}
[S-stabilizability]
The S-stabilizability of a networked system \eqref{xtfc1} can be realized, if for any initial state $x\left( 0 \right) = \left[ x_1^T\left( 0 \right), \ldots ,x_m^T\left( 0 \right) \right]^T$, the system state $x\left(t \right)$ can be driven by the control input $u\left(t \right)$ into the convex hull $S$ formed by the leader state.
\end{definition}

%
\section{The main results}
\subsection{Static Event-Triggered Control (SETC)}
In this subsection, we solve the S-stabilizability of linear multi-agent systems under event-triggered conditions. To achieve this,
we explicitly make several key Assumptions before analysis.
\begin{assumption}\label{gf}
The follower network $\mathcal{G}_\mathcal{F}$ is weakly connected and contains $c$ distinct iSCC cells.
\end{assumption}
\begin{assumption}\label{js3}
The pair $\left( {A,B} \right)$ is stabilizable.
\end{assumption}
It should be noted that all eigenvalues of $A$ cannot be guaranteed to be on the closed left half-complex plane.

Under Assumption \ref{js3}, there is a symmetric positive definite matrix $R > 0$ that satisfies Riccati inequality with $\varsigma >0$,
\[
{A^T}R{\rm{ + }}RA - \varsigma RB{B^T}R <  - \varsigma I.
\]
Next, the event-triggered control strategy will be designed.
For agent $i$, define ${\left\{ {t_k^i} \right\}_{k \in {\mathbb{Z}_{ \ge 0}}}}$ as the trigger instant sequence with $t_{k + 1}^i = \inf \left\{ {t:{f_i}\left( t \right) \ge 0,t > t_k^i} \right\}$, where ${f_i}\left( t \right) \ge 0$ is the event-triggered condition to be designed later.
Here, it is assumed that ${t_0^i = 0}$.

Based on the above description, the following control mechanisms are considered:
\begin{equation}\label{xy1}
\begin{split}
{u_i}\left( t \right) = \left\{ {\begin{array}{*{20}{c}}
{0,}&{i \in \mathcal{R}}\\
{ - K{P_i}\left( {t_k^i} \right),}&{i \in \mathcal{F}}
\end{array}} \right.
\end{split}
\end{equation}
where ${P_i}\left( t \right) = \sum\nolimits_{j = 1}^m {{a_{ij}}\left( {{x_i}\left( t \right) - {x_j}\left( t \right)} \right)}  + \sum\nolimits_{j = m + 1}^N {{b_{ij}}\left( {{x_i}\left( t \right) - {x_j}\left( t \right)} \right)} $, and $K$ is the feedback gain matrix to be designed.
\begin{remark}
Noted that the control mechanism only need the state ${x_i}\left( {t_k^i} \right)$ rather than the real-time state of the agent $i$ within interval $t \in \left[ {t_k^i,t_{k + 1}^i} \right)$. Therefore, this reduces the number of data transmission during the operation of the system, improves the network efficiency and reduces energy consumption.
\end{remark}
We pursue the following event-trigger functions for determining the trigger instants in the analysis:

\begin{equation}\label{cfhs}
\begin{split}
{f_i}(t) = \int_{t_k^i}^t {{{\left\| {{e_i}(s)} \right\|}^2}ds}  - \int_{t_k^i}^t {\left( {k_i}{{\left\| {{P_i}( {t_k^i})} \right\|}^2} + \beta {e^{ - \sigma s}} \right)ds}  > 0,\\
i \in {\cal F}
\end{split}
\end{equation}

For the convenience of discussion,
define $ x\left( t \right) = {\left[ {{{ x}_1\left( t \right)}^T, \cdots ,{{ x}_m\left( t \right)}^T} \right]^T}$, $e\left( t \right) = {\left[ {{e_{{1}}}{{\left( t \right)}^T}, \cdots ,{e_{{m}}}{{\left( t \right)}^T}} \right]^T}$,
where
${e_{{i}}}\left( t \right) = {P_i}\left( {t_k^i} \right) - {P_i}\left( t \right)$ for $t \in \left[ {t_k^i,t_{k + 1}^i} \right)$.

By substituting protocol \eqref{xy1} into \eqref{xtfc1}, the closed-loop system can be summarized as
\begin{equation}
\begin{split}
{{\dot x}_i}\left( t \right)
=& A{x_i}\left( t \right) - BK{P_i}\left( t \right) - BK{e_i}\\
=& A{x_i}\left( t \right) - \left( {{{\cal L}_{i:}} \otimes BK} \right)x\left( {t} \right) - BK\sum\nolimits_{j = m + 1}^N {{b_{ij}}} {x_i}\left( {t} \right)\\
 &+ BK\sum\nolimits_{j = m + 1}^N {{b_{ij}}} {x_j}\left( {t} \right) - BK{e_i}
\end{split}
\end{equation}
Above systems can be given in a compact form of
\begin{equation}
\begin{split}
\dot x
= &\left[ {I_m} \otimes A - {\mathcal{L}_\mathcal{F}} \otimes \left( {BK} \right) - \sum\nolimits_{j = m + 1}^N diag\left\{ {{b_{1j}}, \ldots ,{b_{mj}}} \right\} \otimes \left( {BK} \right) \right]x\\
&+ \sum\nolimits_{j = m + 1}^N \left[ {diag\left\{ {{b_{1j}}, \ldots ,{b_{mj}}} \right\} \otimes \left( {BK} \right)}\left( {1_m} \otimes {x_j}\right)\right] - \left( {{I_m} \otimes BK} \right)e\left( t \right)\\
=& \left[ {{I_m} \otimes A - \left( {{\mathcal{L}_\mathcal{F}} + \sum\nolimits_{j = m + 1}^N {{B_{oj}}} } \right) \otimes \left( {BK} \right)} \right]x \\
&+ \sum\nolimits_{j = m + 1}^N {\left( {{B_{oj}} \otimes \left( {BK} \right)} \right)\left( {{1_m} \otimes {x_j}} \right)}- \left( {{I_m} \otimes BK} \right)e\left( t \right)\\
=& \left[ {{I_m} \otimes A - \rm M \otimes \left( {BK} \right)} \right]x - \left( {{I_m} \otimes BK} \right)e\left( t \right)\\
&+ \sum\nolimits_{j = m + 1}^N {\left( {{B_{oj}} \otimes \left( {BK} \right)} \right)\left( {{1_m} \otimes {x_j}} \right)}
\end{split}
\end{equation}
where ${B_{oj}} = diag\left\{ {{b_{1j}}, \ldots ,{b_{mj}}} \right\}$,
$\rm M = {\mathcal{L}_\mathcal{F}} + \sum\nolimits_{j = m + 1}^N {{B_{oj}}} $.

Define the state difference $\varepsilon \left( t \right)$ among leaders and followers as
\begin{equation}
\begin{split}
\varepsilon \left( t \right)=& \left( {\rm M \otimes {I_n}} \right)x - \sum\nolimits_{j = m + 1}^N {\left( {{B_{oj}} \otimes {I_n}} \right)\left( {1_m} \otimes {x_j} \right)} \\
=& \left( {\rm M \otimes {I_n}} \right)\left[ x - \sum\nolimits_{j = m + 1}^N \left( {\left( {{\rm M^{ - 1}}{B_{oj}}} \right) \otimes {I_n}} \right)\left( {{1_m} \otimes {x_j}} \right)  \right]
\end{split}
\end{equation}
The above equation holds because $\rm M$ is strictly diagonally dominant and ${\rm M^{- 1}}$ exists.

Denote a variable ${{\tilde x}_i} = {x_i} - \sum\nolimits_{j = m + 1}^N {{\chi _{ij}}{x_j}} $, where ${{\chi _{ij}}}$ is  the $i$-th element of ${\rm M^{ - 1}}{B_{oj}}{1_m}$.

Then
\begin{equation}\label{6}
\begin{split}
{{\dot {\tilde x}}_i} &= {{\dot x}_i} - \sum\nolimits_{j = m + 1}^N {{\chi _{ij}}{{\dot x}_j}} \\
 &= A{x_i} - BK{P_i}\left( {t_k^i} \right) - \sum\nolimits_{j = m + 1}^N {{\chi _{ij}}A{x_j}} \\
  &= A{x_i} - BK{P_i}\left( t \right) - \sum\nolimits_{j = m + 1}^N {{\chi _{ij}}A{x_j}}  - BK{e_i}\left( t \right)
\end{split}
\end{equation}
Furthermore, $\tilde x$ evolves according to
\begin{equation}\label{ds}
\begin{split}
\dot {\tilde x}=& \left( {{I_m} \otimes A} \right)x - \left[ {\left( {{\mathcal{L}_\mathcal{F}} + \sum\nolimits_{j = m + 1}^N {{B_{oj}}} } \right) \otimes BK} \right]x \\
&- \left( {{I_m} \otimes BK} \right)e(t) + \sum\nolimits_{j = m + 1}^N ({{B_{oj}} \otimes ({BK})})( {{1_m} \otimes {x_j}}) \\
&- \sum\nolimits_{j = m + 1}^N {\left( {\left( {{\rm M^{ - 1}}{B_{oj}}} \right) \otimes {I_n}} \right)\left( {{I_m} \otimes A} \right)\left( {{1_m} \otimes {x_j}} \right)} \\
= & ({{I_m} \otimes A})\left\{ x - \sum\nolimits_{j = m + 1}^N ({( {{\rm M^{ - 1}}{B_{oj}}}) \otimes {I_n}})( {{1_m} \otimes {x_j}}) \right\}\\
&+ \left( {\rm M \otimes BK} \right)\sum\nolimits_{j = m + 1}^N {\left( {\left( {{\rm M^{ - 1}}{B_{oj}}} \right) \otimes {I_n}} \right)\left( {{1_m} \otimes {x_j}} \right)}\\
&- \left( {{I_m} \otimes BK} \right)e\left( t \right)- \left( {\rm M \otimes BK} \right)x \\
=& \left( {{I_m} \otimes A - \rm M \otimes BK} \right)\left\{ x - \sum\nolimits_{j = m + 1}^N \left( {\left( {{\rm M^{ - 1}}{B_{oj}}} \right) \otimes {I_n}} \right)\times \left( {{1_m} \otimes {x_j}} \right)  \right\}\\
&- \left( {{I_m} \otimes BK} \right)e\left( t \right)\\
=& {\left( {{I_m} \otimes A - \rm M \otimes BK} \right)\tilde x - \left( {{I_m} \otimes BK} \right)e\left( t \right)}
\end{split}
\end{equation}
Before discussing the S-stabilizability of system under the event-triggered condition \eqref{cfhs}, we will first review the important property of $\rm M$.
\begin{remark}\label{bz1}
$\rm M$ is strictly diagonally dominant, which means that there exists a positive definite diagonal matrix $\Psi$ and a positive number $\eta$, such that $\Psi \rm M + {\rm M^T}\Psi  > \eta \Psi$, where $\Psi  = diag\left\{ {{\psi _1}, \ldots ,{\psi _N}} \right\}$.
\end{remark}
\begin{theorem}\label{dlm}
Under Assumptions \ref{gf} and \ref{js3}, consider multi-agent system \eqref{xtfc1} with protocol \eqref{xy1}.
Agent $i$ determines the triggering time sequence $\left\{ {t_k^i} \right\}_{k = 1}^\infty $ by \eqref{cfhs}.
If for each iSCC cell in weakly connected follower network $\mathcal{G}_\mathcal{F}$, there exists at least one vertex $i\left(i \in \mathcal{F} \right)$ such that ${b_{ij}} > 0\left(j \in \mathcal{R}\right)$, then $S-stabilizability$ of system \eqref{xtfc1} can be realized by choosing ${k_{\max }} < {\textstyle{{\varsigma {v_1}} \over {{\rho _1}\left\| H \right\| + {v_1}\varsigma }}}$, $\varsigma  = \eta {\lambda _{\min }}\left( \Psi  \right) - {v_1}$ .
\end{theorem}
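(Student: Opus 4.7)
The plan is to carry out a Lyapunov argument on the reduced dynamics \eqref{ds}, convert the integral event condition \eqref{cfhs} into a cumulative bound on $\|e\|^{2}$, and invoke the comparison principle (Lemma \ref{lemma3}) to drive $\tilde x\to 0$; a final structural observation then identifies $\tilde x_i\to 0$ with the convergence of $x_i$ into the convex hull $S$. I take the feedback gain to be $K=B^{T}R$ with $R$ from the Riccati inequality in Assumption \ref{js3}, and pick the diagonal $\Psi=\mathrm{diag}(\psi_1,\dots,\psi_m)$ supplied by Remark \ref{bz1} so that $\Psi\mathrm{M}+\mathrm{M}^{T}\Psi\geq\eta\Psi$. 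With $V(t)=\tilde x^{T}(\Psi\otimes R)\tilde x$, differentiating along \eqref{ds} gives
\[
\dot V=\tilde x^{T}\bigl(\Psi\otimes(A^{T}R+RA)\bigr)\tilde x-\tilde x^{T}\bigl((\Psi\mathrm{M}+\mathrm{M}^{T}\Psi)\otimes RBB^{T}R\bigr)\tilde x-2\tilde x^{T}(\Psi\otimes RBB^{T}R)e,
\]
so that the Riccati bound on the first block and $\eta\Psi$ on the second leave only a $\tilde x$--$e$ cross term to control.

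Next I would split that cross term via Young's inequality (Lemma \ref{lemmays}) at parameter $v_1$: part of it is absorbed into the negative-definite quadratic in $\tilde x$---this is precisely why $\varsigma=\eta\lambda_{\min}(\Psi)-v_1$ appears in the statement---and the remainder is a multiple of $\|e\|^{2}$. To close on $\|e\|^{2}$ I would use the identity $P_i(t)=\varepsilon_i(t)$ implicit in the definitions above \eqref{ds} together with $\varepsilon=(\mathrm{M}\otimes I_n)\tilde x$, giving $\|P(t)\|^{2}\leq\|H\|^{2}\|\tilde x\|^{2}$, combined with the triangle bound $\|P_i(t_k^i)\|^{2}\leq 2\|P_i(t)\|^{2}+2\|e_i(t)\|^{2}$. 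Integrating $\dot V$ over $[0,t]$ and invoking \eqref{cfhs}, which guarantees $\int_{t_k^i}^{t}\|e_i(s)\|^{2}ds\leq\int_{t_k^i}^{t}(k_i\|P_i(t_k^i)\|^{2}+\beta e^{-\sigma s})ds$ on every inter-event interval, I can bound $\int_0^{t}\|e\|^{2}ds$ by a $k_{\max}$-multiple of $\int_0^{t}\|\tilde x\|^{2}ds$ plus an $\mathcal{O}(e^{-\sigma t})$ residual; the resulting self-referential $\int_0^{t}\|e\|^{2}ds$ on the right is absorbed because the threshold imposed on $k_{\max}$ is a fortiori below $1/2$.

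Collecting coefficients, the net quadratic in $\tilde x$ carries a factor $-\varsigma+(\rho_1\|H\|+v_1\varsigma)k_{\max}/v_1$, where $\rho_1$ lumps $\|R\|$, $\|B\|$ and $\|\Psi\|$ together; the stated choice $k_{\max}<\varsigma v_1/(\rho_1\|H\|+v_1\varsigma)$ is exactly what keeps this coefficient strictly negative. The upshot is a scalar inequality $\dot V\leq-\gamma V+\kappa e^{-\sigma t}$ with $\gamma>0$, so Lemma \ref{lemma3} immediately yields $V(t)\to 0$ and hence $\tilde x(t)\to 0$. To identify this with S-stabilizability I then verify that the $\chi_{ij}$ in $\tilde x_i=x_i-\sum_{j=m+1}^{N}\chi_{ij}x_j$ form convex weights: $\mathcal{L}_{\mathcal{F}}1_m=0$ forces $\mathrm{M}1_m=\sum_{j}B_{oj}1_m$, so $\sum_{j}\chi_{ij}=(\mathrm{M}^{-1}\sum_{j}B_{oj}1_m)_i=1$, while the nonsingular $M$-matrix property of $\mathrm{M}$ under Assumption \ref{gf} (via Lemma \ref{yin2}) gives $\chi_{ij}\geq 0$. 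Thus $x_i(t)$ approaches a convex combination of the leader states and so enters $S$.

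The principal obstacle I foresee is the integral (rather than pointwise) form of the triggering condition \eqref{cfhs}: direct differentiation of the triggering inequality is unavailable, so one must integrate the Lyapunov estimate and then untangle the self-referential appearance of $\int_0^{t}\|e\|^{2}ds$ on both sides. It is exactly this entanglement that couples $k_{\max}$ to $v_1$, $\varsigma$, $\rho_1$ and $\|H\|$ and pins down the threshold printed in the theorem; getting the bookkeeping right---in particular ensuring that the $\int\|\tilde x\|^{2}ds$ produced from $\|P_i(t_k^i)\|^{2}$ is dominated by the linear-in-$V$ term coming from the Lyapunov decrement---is where the proof's technical weight lies.
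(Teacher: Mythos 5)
Your overall architecture coincides with the paper's: the same Lyapunov function $V=\tilde x^{T}(\Psi\otimes R)\tilde x$, the same use of Remark \ref{bz1} and the Riccati inequality with $K=B^{T}R$, the same Young's-inequality split producing $\dot V\le-\varsigma\|\tilde x\|^{2}+\tfrac{\rho_1}{v_1}\|e\|^{2}$, the same conversion of the integral trigger \eqref{cfhs} into a bound on $\int\|e\|^{2}ds$ in terms of $\int\|\tilde x\|^{2}ds$, and the same identification of $\sum_j\chi_{ij}x_j$ as a convex combination of leader states. (Your explicit verification that $\sum_j\chi_{ij}=1$ and $\chi_{ij}\ge 0$ via the M-matrix property is actually more complete than the paper's, which asserts the convex-combination property without checking nonnegativity.)

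There is, however, a genuine gap at the closing step. You correctly observe that the integral form of \eqref{cfhs} only controls $\int_{0}^{t}\|e(s)\|^{2}ds$, not $\|e(t)\|^{2}$ pointwise, yet you then assert that ``the upshot is a scalar inequality $\dot V\le-\gamma V+\kappa e^{-\sigma t}$'' and invoke the comparison principle (Lemma \ref{lemma3}) to conclude $V(t)\to 0$. That pointwise differential inequality is not available here: at a given instant $t$ the error $\|e(t)\|^{2}$ can exceed $k_{\max}\|H\|^{2}\|\tilde x(t)\|^{2}+\beta e^{-\sigma t}$ as long as the running integral has not yet crossed the threshold, so the $\|e\|^{2}$ term in $\dot V$ cannot be absorbed instant by instant. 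What your bookkeeping actually yields is only the integrated estimate $V(t)-V(0)\le-\bigl(\varsigma-\tfrac{\rho_1 k_{\max}\|H\|}{v_1(1-k_{\max})}\bigr)\int_{0}^{t}\|\tilde x\|^{2}ds+\mathcal{O}(1)$, which gives boundedness of $V$ and of $\int_{0}^{\infty}\|\tilde x\|^{2}ds$ but does not by itself force $V\to 0$ (an integral inequality of the form $V(t)\le C-\gamma\int_{0}^{t}V$ does not imply exponential decay, since one lacks a lower bound on the accumulated integral). The paper closes this by a different tool: from $\int_{0}^{\infty}\|\tilde x\|^{2}ds<\infty$ together with boundedness of $\dot V$ (hence uniform continuity of $\|\tilde x\|^{2}$) it invokes Barbalat's lemma to conclude $\|\tilde x(t)\|\to 0$. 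You need that step, or an equivalent uniform-continuity argument, in place of the comparison principle; the pointwise route you sketch is the one used for the \emph{dynamic} trigger of Theorem \ref{d3}, where the auxiliary variable $\varphi_i$ restores a pointwise inequality, but it is not valid for the static integral trigger \eqref{cfhs}. A secondary, smaller issue: with your (correct) triangle bound $\|P_i(t_k^i)\|^{2}\le 2\|P_i(t)\|^{2}+2\|e_i(t)\|^{2}$ the self-referential absorption requires $2k_{\max}<1$, and the stated threshold $\varsigma v_1/(\rho_1\|H\|+v_1\varsigma)$ is not automatically below $1/2$, so the constants do not close exactly as printed; this is a bookkeeping discrepancy shared with the paper rather than a flaw unique to your argument.
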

\begin{proof}
Construct candidate Lyapunov functions \[V_1 = {{\tilde x}^T}\left( {\Psi  \otimes R} \right)\tilde x\]
Then the derivative of $V_1$ along the trajectories of system \eqref{ds} yields
\begin{equation}
\begin{split}
\dot V_1
=& 2{{\tilde x}^T}\left( {\Psi  \otimes R} \right)\left\{ {\left( {{I_m} \otimes A - \rm M \otimes BK} \right)\tilde x - \left( {{I_m} \otimes BK} \right)e} \right\}\\
=& 2{{\tilde x}^T}\left( {\Psi  \otimes \left( {RA} \right) - \left( {\Psi \rm M} \right) \otimes RBK} \right)\tilde x - 2{{\tilde x}^T}\left( {\Psi  \otimes RBK} \right)e\\
=& {{\tilde x}^T}\left\{ {\Psi  \otimes \left( {{A^T}R + RA} \right) - \left( {\Psi \rm M + {\rm M^T}\Psi } \right) \otimes RBK} \right\}\tilde x - 2{{\tilde x}^T}\left( {\Psi  \otimes RBK} \right)e
\end{split}
\end{equation}
According to the Remark \ref{bz1}, the above formula is equivalent to
\begin{equation}
\begin{split}
\dot V_1 \le & {{\tilde x}^T}\left\{ {\Psi  \otimes \left( {{A^T}R + RA} \right) - \eta \Psi  \otimes RBK} \right\}\tilde x - 2{{\tilde x}^T}\left( {\Psi  \otimes RBK} \right)e\\
=& {{\tilde x}^T}\left\{ {\Psi  \otimes \left( {{A^T}R + RA - \eta RBK} \right)} \right\}\tilde x - 2{{\tilde x}^T}\left( {\Psi  \otimes RBK} \right)e
\end{split}
\end{equation}
If $K=B^TR$, from the algebraic Riccati inequality, one can obtain that
\begin{equation}
\begin{split}
\dot V_1
&\le {{\tilde x}^T}\left\{ {\Psi  \otimes \left( { - \eta I} \right)} \right\}\tilde x - 2{{\tilde x}^T}\left( {\Psi  \otimes RBK} \right)e\\
 &\le  - \eta {\lambda _{\min }}\left( \Psi  \right){\left\| {\tilde x} \right\|^2} - 2{{\tilde x}^T}\left( {\Psi  \otimes RBK} \right)e
\end{split}
\end{equation}
Further, using the Lemma \ref{lemmays}, we have
\begin{equation}\label{yss}
\begin{split}
 - 2{{\tilde x}^T}\left( {\Psi  \otimes RBK} \right)e \le {v_1}{\left\| {\tilde x} \right\|^2} + {\textstyle{{{\rho _1}} \over {{v_1}}}}{\left\| e \right\|^2}
\end{split}
\end{equation}
where ${\rho _1} = {\lambda _{\max }}\left[ {{\Psi ^2} \otimes {{\left( {RBK} \right)}^2}} \right]$.
Based on \eqref{yss}, \eqref{vd} holds.
\begin{equation}\label{vd}
\begin{split}
\dot V \le  - \varsigma {\left\| {\tilde x} \right\|^2} + {\textstyle{{{\rho _1}} \over {{v_1}}}}{\left\| e \right\|^2}
\end{split}
\end{equation}
where $\varsigma  = \eta {\lambda _{\min }}\left( \Psi  \right) - {v_1}$.

Recalling the event-triggered condition \eqref{cfhs}, we obtain that
\begin{equation}\label{wcjf}
\begin{split}
\int_{t_k^i}^t {{{\left\| {{e_i}\left( s \right)} \right\|}^2}ds}  \le \frac{1}{{1 - {k_i}}}\int_{t_k^i}^t {\left( {{k_i}{{\left\| {{P_i}\left( s \right)} \right\|}^2} + \beta {e^{ - \sigma s}}} \right)ds}
\end{split}
\end{equation}
It can be further obtained from the above equation,
\scriptsize
\[\int_{{t_0}}^t {{{\left\| {e\left( s \right)} \right\|}^2}ds}  \le \frac{1}{{1 - {k_{\max }}}}\int_{{t_0}}^t {\left( {{k_{\max }}{{\left\| {\left( {H \otimes {I_n}} \right)\tilde x\left( s \right)} \right\|}^2} + m\beta {e^{ - \sigma s}}} \right)ds} \]
\normalsize
Integrating \eqref{vd} over $\left[ {{t_0},t} \right)$, yields
\begin{equation}\label{dsc}
\begin{split}
V_1\left( t \right) - V_1\left( {{t_0}} \right)
\le &  - \left( {\varsigma  - {\textstyle{{{\rho _1}{k_{\max }}\left\| H \right\|} \over {{v_1}\left( {1 - {k_{\max }}} \right)}}}} \right)\int_{{t_0}}^t {{{\left\| {\tilde x\left( s \right)} \right\|}^2}ds}\\
&+ \frac{{{\rho _1}N\beta }}{{{v_1}\sigma \left( {1 - {k_{\max }}} \right)}}\left( {{e^{ - \sigma {t_0}}} - {e^{ - \sigma t}}} \right)
\end{split}
\end{equation}
where ${k_{\max }} < {\textstyle{{\varsigma {v_1}} \over {{\rho _1}\left\| H \right\| + {v_1}\varsigma }}}$.

Because $V_1\left( t \right) > 0$, ${e^{ - \sigma t}} \ge 0$, which can be obtained from \eqref{dsc}
\[\int_{{t_0}}^t {{{\left\| {\tilde x\left( s \right)} \right\|}^2}ds}  \le \frac{{V_1\left( {{t_0}} \right){v_1}\sigma \left( {1 - {k_{\max }}} \right) + {\rho _1}N\beta {e^{ - \sigma {t_0}}}}}{{\varsigma {v_1}\sigma \left( {1 - {k_{\max }}} \right) - {\rho _1}{k_{\max }}\left\| \rm M \right\|}}\]
Therefore, $\int_{{t_0}}^t {{{\left\| {\tilde x\left( s \right)} \right\|}^2}ds} $ is bounded.
And \eqref{dsc} indicates $V_1\left( t \right)$ is bounded, hence ${\dot V}$ is bounded.
Furthermore, we can deduce that $\frac{d}{{d{t^2}}}\int_{{t_0}}^t {{{\left\| {\tilde x\left( s \right)} \right\|}^2}ds}$ is bounded.
Therefore, according to Barbalat's Lemma, we can get
\[\mathop {\lim }\limits_{t \to \infty } \frac{d}{{dt}}\int_{{t_0}}^t {{{\left\| {\tilde x\left( s \right)} \right\|}^2}ds}  = \mathop {\lim }\limits_{t \to \infty } {\left\| {\tilde x\left( s \right)} \right\|^2} = 0,\]
i.e.
\[\mathop {\lim }\limits_{t \to \infty } \left\{ {x - \sum\nolimits_{j = m + 1}^N {\left( {{\rm M^{ - 1}}{B_{oj}}{1_m}} \right) \otimes {x_j}} } \right\} = 0.\]

Next, we prove that the follower's state converges to the convex hull spanned by leaders.

Denote
${\mathcal{L}_{\mathcal{FR}}} = {{{\mathcal{L}_\mathcal{F}}} \mathord{\left/
 {\vphantom {{{\mathcal{L}_\mathcal{F}}} {\left( {n - m} \right)}}} \right.
 \kern-\nulldelimiterspace} {\left( {n - m} \right)}}$,
since $\left( {{\mathcal{L}_\mathcal{F}} \otimes {I_n}} \right)\left( {{1_m} \otimes {x_\mathcal{R}}} \right) = 0$,
hence
\begin{equation}
\begin{split}
\varepsilon \left( t \right)=& \left( {\rm M \otimes {I_n}} \right)x - \sum\nolimits_{j = m + 1}^N {\left( {{B_{oj}} \otimes {I_n}} \right)\left( {{1_m} \otimes {x_j}} \right)} \\
=& \left( {\rm M \otimes {I_n}} \right)x - \sum\nolimits_{j = m + 1}^N \left( {\left( {{\mathcal{L}_{\mathcal{FR}}} + {B_{oj}}} \right) \otimes {I_n}} \right)\times \left( {{1_m} \otimes {x_j}} \right) \\
=& \left( {\rm M \otimes {I_n}} \right)\left\{ x - \sum\nolimits_{j = m + 1}^N \left( {{\rm M^{ - 1}}\left( {{\mathcal{L}_{\mathcal{FR}}} + {B_{oj}}} \right){1_m}} \right)\otimes {x_j}  \right\}
\end{split}
\end{equation}
Note that
\begin{equation*}
\begin{split}
&\sum\nolimits_{j = m + 1}^N {{\rm M^{ - 1}}\left( {{\mathcal{L}_{\mathcal{FR}}} + {B_{oj}}} \right){1_m}}\\
=& \sum\nolimits_{j = m + 1}^N {{\rm M^{ - 1}}\left( {{{{\mathcal{L}_\mathcal{F}}} \mathord{\left/{\vphantom {{{L_\mathcal{F}}} {\left( {n - m} \right)}}} \right.
 \kern-\nulldelimiterspace} {\left( {n - m} \right)}} + {B_{oj}}} \right){1_m}} \\
= & \sum\nolimits_{j = m + 1}^N {{\rm M^{ - 1}}\rm M{1_m}}  = {1_m}
 \end{split}
\end{equation*}
Consequently,
\begin{equation}
\begin{split}
&\sum\nolimits_{j = m + 1}^N {\left( {{\rm M^{ - 1}}\left( {{\mathcal{L}_{\mathcal{FR}}} + {B_{oj}}} \right){1_m}} \right) \otimes {x_j}}\\
= &\sum\nolimits_{j = m + 1}^N {\left( {{\rm M^{ - 1}}{B_{oj}}{1_m}} \right) \otimes {x_j}}
\end{split}
\end{equation}
is the column vector of a convex combination of points in $X = \left\{ {{x_{m + 1}}, \ldots ,{x_N}} \right\}$.
Thus, it is concluded that system \eqref{xtfc1} achieves S-stabilizability.
\end{proof}
\begin{remark}
Obviously, $\mathop {\lim }\limits_{t \to \infty } {x_j}\left( t \right) = 0$ for $j \in \left\{ {m + 1, \ldots ,N} \right\}$ if $A$ is Hurwitz matrix, then $\mathop {\lim }\limits_{t \to \infty } {x_i}\left( t \right) = 0$, where $i \in \left\{ {1, \ldots ,m} \right\}$. That is, system \eqref{xtfc1} realizes stabilizability.

\end{remark}
\begin{remark}
The event-trigger mechanism is distributed, because the control protocol of each agent only depends on the state of itself and its neighbors, without any prior knowledge of global parameters. In addition, from the event-trigger condition \eqref{cfhs}, each agent does not need to monitor the state of neighbors continuously, hence this greatly reduces the frequency of driving updates and communication among agents.
\end{remark}
\begin{theorem}\label{dl6}
Under the conditions of \eqref{cfhs}, the system \eqref{xtfc1} does not exhibit Zeno behavior.
The interval between any two consecutive event-trigger instants of the system is not less than
\[{\textstyle{1 \over {\left\| A \right\|}}}\ln \left( {1 + {\textstyle{{\left\| A \right\|} \over {{h_i}}}}{{\left( {\beta {e^{ - \sigma \left( {t_k^i + \tau _k^i} \right)}}} \right)}^{{\textstyle{1 \over 2}}}}} \right)\]
\end{theorem}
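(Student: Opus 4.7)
The plan is to bound, between two consecutive trigger times, the growth rate of the measurement error $e_i(t)=P_i(t_k^i)-P_i(t)$ and then compare that growth to the right-hand side of the event-triggered condition \eqref{cfhs}. Because $e_i(t_k^i)=0$, the next trigger cannot occur before $\|e_i(\cdot)\|^2$ has accumulated enough to overtake the persistent exponential threshold $\beta e^{-\sigma s}$ built into the trigger function; translating that threshold through the growth estimate will yield the logarithmic lower bound stated in the theorem.

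First, I would differentiate $e_i$ along the closed-loop dynamics. Since $\dot e_i(t)=-\dot P_i(t)$ on $[t_k^i,t_{k+1}^i)$ and $P_i$ is a linear combination of states $x_i,x_j$ governed by $\dot x=Ax+Bu$ with control $u_i=-KP_i(t_k^i)$ held constant on the interval, one obtains a bound of the form
\[
\frac{d}{dt}\|e_i(t)\|\ \le\ \|\dot e_i(t)\|\ \le\ \|A\|\,\|e_i(t)\|+h_i,
\]
where $h_i$ is a constant gathering the frozen terms $A P_i(t_k^i)$, $-BKP_i(t_k^i)$, and the leaders' contributions $Ax_j$ for $j\in\mathcal R$; all of these are bounded on the single interval because the leader states evolve linearly under $A$ and $P_i(t_k^i)$ is constant. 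Applying the comparison principle (Lemma \ref{lemma3}) to this scalar linear inequality, starting from $\|e_i(t_k^i)\|=0$, yields
\[
\|e_i(t)\|\ \le\ \frac{h_i}{\|A\|}\bigl(e^{\|A\|(t-t_k^i)}-1\bigr),\qquad t\in[t_k^i,t_{k+1}^i).
\]

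Second, I would use the trigger rule \eqref{cfhs}: at the crossing time $t_{k+1}^i$ the two integrands in $f_i$ must, by continuity and the strict inequality $f_i(t_{k+1}^i)\ge0$, satisfy
\[
\|e_i(t_{k+1}^i)\|^2\ \ge\ k_i\|P_i(t_k^i)\|^2+\beta e^{-\sigma t_{k+1}^i}\ \ge\ \beta e^{-\sigma t_{k+1}^i}.
\]
Writing $\tau_k^i=t_{k+1}^i-t_k^i$ and combining this with the growth estimate gives
\[
\bigl(\beta e^{-\sigma(t_k^i+\tau_k^i)}\bigr)^{1/2}\ \le\ \frac{h_i}{\|A\|}\bigl(e^{\|A\|\tau_k^i}-1\bigr),
\]
which, solved for $\tau_k^i$, produces exactly the implicit lower bound $\tau_k^i\ \ge\ \tfrac{1}{\|A\|}\ln\bigl(1+\tfrac{\|A\|}{h_i}(\beta e^{-\sigma(t_k^i+\tau_k^i)})^{1/2}\bigr)$. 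Since $\beta>0$, the right-hand side is strictly positive for every $k$, so Zeno behavior is excluded.

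The main obstacle will be the first step: writing $h_i$ as a single, explicit constant valid throughout $[t_k^i,t_{k+1}^i)$. The difficulty is that the bound must involve only quantities known at $t_k^i$ (to be a usable inter-event estimate), yet $\dot P_i$ naturally contains the current states $x_i(t)$ and $x_j(t)$. I would handle this by absorbing the $x$-dependence into the $\|A\|\|e_i\|$ term via the identity $P_i(t)=P_i(t_k^i)-e_i(t)$, so that all residual terms are proportional to $P_i(t_k^i)$, leader states, and the held control $KP_i(t_k^i)$. A secondary, minor point is justifying the "crossing" inequality rigorously, which follows because both integrals are continuous in $t$ and coincide at $t_k^i$, so $f_i(t_{k+1}^i)\ge0$ with $f_i(t_k^i)=0$ forces the integrand comparison at $t_{k+1}^i$.
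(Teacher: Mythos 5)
Your proposal follows essentially the same route as the paper: differentiate $e_i$ using $P_i(t)=P_i(t_k^i)-e_i(t)$ to get $\tfrac{d}{dt}\|e_i\|\le\|A\|\|e_i\|+h_i$, apply the comparison principle from $\|e_i(t_k^i)\|=0$ to obtain $\|e_i(t)\|\le\tfrac{h_i}{\|A\|}(e^{\|A\|(t-t_k^i)}-1)$, and then measure the time needed for this envelope to reach the threshold $\beta e^{-\sigma t}$, which yields the implicit bound in the statement. Two small points where your write-up is looser than it should be. First, your ``crossing'' claim --- that $f_i(t_{k+1}^i)\ge 0$ forces the \emph{integrand} ordering $\|e_i(t_{k+1}^i)\|^2\ge k_i\|P_i(t_k^i)\|^2+\beta e^{-\sigma t_{k+1}^i}$ at the endpoint --- does not follow from continuity of the integrals; the paper instead uses the clean contrapositive (if $\|e_i(s)\|^2\le\beta e^{-\sigma s}$ on the whole interval then $f_i\le 0$, so no trigger), which gives the same conclusion via monotonicity of the envelope. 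Second, your final step ``the right-hand side is strictly positive for every $k$, so Zeno is excluded'' is not sufficient on its own: the per-interval lower bound depends on $t_k^i$ and decays to zero as $t_k^i\to\infty$, so positivity for each $k$ does not rule out an accumulation point. The paper appends a contradiction argument for exactly this reason; the clean fix is to note that if $t_k^i\to t^*<\infty$ then $\beta e^{-\sigma(t_k^i+\tau_k^i)}\ge\beta e^{-\sigma t^*}>0$ uniformly, giving a uniform positive lower bound on $\tau_k^i$ and hence a contradiction.
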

\begin{proof}
According to the definition of ${e_i}\left( t \right)$, we can obtain that
\begin{equation}
\begin{split}
{{\dot e}_i}\left( t \right)=&  - {{\dot P}_i}\left( t \right)\\
=&  - \sum\nolimits_{j = 1}^m {{a_{ij}}\left( {A{x_i}\left( t \right) - BK{P_i}\left( {t_k^i} \right)} \right)}+ \sum\nolimits_{j = 1}^m {{a_{ij}}\left( {A{x_j}\left( t \right) - BK{P_j}\left( {t_{k'}^j} \right)} \right)} \\
&- \sum\nolimits_{j = m + 1}^N {{b_{ij}}\left( {A{x_i}\left( t \right) - BK{P_i}\left( {t_k^i} \right)} \right)}+ \sum\nolimits_{j = m + 1}^N {{b_{ij}}\left( {A{x_j}\left( t \right) - BK{P_j}\left( {t_{k'}^j} \right)} \right)}\\
=&  - \sum\nolimits_{j = 1}^m {{a_{ij}}\left( {A{x_i}\left( t \right) - A{x_j}\left( t \right)} \right)}- \sum\nolimits_{j = m + 1}^N {b_{ij}}\left( {A{x_i}\left( t \right)- A{x_j}\left( t \right)} \right)\\
&+ BK\sum\nolimits_{j = 1}^m {{a_{ij}}\left( {{P_i}\left( {t_k^i} \right) - {P_j}\left( {t_{k'}^j} \right)} \right)}+ BK\sum\nolimits_{j = m + 1}^N {{b_{ij}}\left( {{P_i}\left( {t_k^i} \right) - {P_j}\left( {t_{k'}^j} \right)} \right)} \\
=& - A{P_i}\left( t \right) + BK\left[ \sum\nolimits_{j = 1}^m {{a_{ij}}\left( {{P_i}\left( {t_k^i} \right) - {P_j}\left( {t_{k'}^j} \right)} \right)} + \sum\nolimits_{j = m + 1}^N {{b_{ij}}\left( {{P_i}\left( {t_k^i} \right) - {P_j}\left( {t_{k'}^j} \right)} \right)}  \right]\\
=& BK\sum\nolimits_{j = 1}^m {{a_{ij}}\left( {{P_i}\left( {t_k^i} \right) - {P_j}\left( {t_{k'}^j} \right)} \right)} + BK\sum\nolimits_{j = m + 1}^N {{b_{ij}}\left( {{P_i}\left( {t_k^i} \right) - {P_j}\left( {t_{k'}^j} \right)} \right)} \\
&+ A\left( {{P_i}\left( {t_k^i} \right) - {P_i}\left( t \right) - {P_i}\left( {t_k^i} \right)} \right)\\
=& BK\sum\nolimits_{j = 1}^m {{a_{ij}}\left( {{P_i}\left( {t_k^i} \right) - {P_j}\left( {t_{k'}^j} \right)} \right)}+ BK\sum\nolimits_{j = m + 1}^N {{b_{ij}}\left( {{P_i}\left( {t_k^i} \right) - {P_j}\left( {t_{k'}^j} \right)} \right)}\\
&+ A{e_i}\left( t \right) - A{P_i}\left( {t_k^i} \right)
\end{split}
\end{equation}

For $t \in \left[ {t_k^i,t_{k + 1}^i} \right)$, it can be derived that
\begin{equation}\label{18}
\begin{split}
\frac{{d\left\| {{e_i}\left( t \right)} \right\|}}{{dt}} \le \left\| {{{\dot e}_i}\left( t \right)} \right\| \le \left\| A \right\|\left\| {{e_i}\left( t \right)} \right\| + {\Gamma _i}
\end{split}
\end{equation}
where
\begin{equation*}
\begin{split}
{\Gamma _i} &= \left\|- A{P_i}\left( {t_k^i} \right) + BK\left[ \sum\nolimits_{j = 1}^m {{a_{ij}}\left( {{P_i}\left( {t_k^i} \right) - {P_j}\left( {t_{k'}^j} \right)} \right)}\right.\right.\\
 & \left.\left.+ \sum\nolimits_{j = m + 1}^N {{b_{ij}}\left( {{P_i}\left( {t_k^i} \right) - {P_j}\left( {t_{k'}^j} \right)} \right)} \right] \right\|,
\end{split}
\end{equation*}
From Theorem \ref{dlm}, one can obtain that ${x_i}(t)$ is bounded, corresponding ${P_i}\left( {t_{k}^i} \right)$ is bounded, so there exists a constant ${h_i}$ that satisfies ${\Gamma _i} \le {h_i}$.
According to \eqref{18}, it can be induced that
\begin{equation}\label{}
\begin{split}
\frac{{d\left\| {{e_i}\left( t \right)} \right\|}}{{dt}} \le \left\| A \right\|\left\| {{e_i}\left( t \right)} \right\| + {h_i}
\end{split}
\end{equation}
Consider a nonnegative function, $\Xi :\left[ {0,\infty } \right) \to {\mathbb{R}_{ \ge 0}}$, which satisfies
\begin{equation}\label{19}
\begin{split}
\dot \Xi  = \left\| A \right\|\Xi  + {h_i}, \Xi \left( 0 \right) = \left\| {{e_i}\left( {t_k^i} \right)} \right\| = 0
\end{split}
\end{equation}
where $\Xi \left( t \right) = {\textstyle{{{h_i}} \over {\left\| A \right\|}}}\left( {{e^{\left\| A \right\|t}} - 1} \right)$ is the solution of \eqref{19}.
According to Lemma \ref{lemma3}, we can easily get $\left\| {{e_i}\left( t \right)} \right\| \le \Xi \left( {t - t_k^i} \right)$.
From event trigger function \eqref{cfhs}, if
\begin{equation}\label{20}
\begin{split}
{\left\| {{e_i}\left( t \right)} \right\|^2} \le \beta {e^{ - \sigma t}}
\end{split}
\end{equation}
then ${f_i}\left( t \right) \le 0$.
Therefore, we can know that the lower bound of the event-trigger interval of agent $i$ can be lower bounded by the evolution time for ${\Xi ^2}\left( {t - t_k^i} \right)$ to evolve from $0$ to $\beta {e^{ - \sigma t}}$,
that is the lowest bound of $\tau _k^i$ can be obtained by \eqref{52}:
\begin{equation}\label{52}
\begin{split}
\beta {e^{ - \sigma \left( {t_k^i + \tau _k^i} \right)}} = {\textstyle{{h_i^2} \over {{{\left\| A \right\|}^2}}}}{\left( {{e^{\left\| A \right\|\tau _k^i}} - 1} \right)^2}
\end{split}
\end{equation}
The above equation is equivalent to
\[\tau _k^i = {\textstyle{1 \over {\left\| A \right\|}}}\ln \left( {1 + {\textstyle{{\left\| A \right\|} \over {{h_i}}}}{{\left( {\beta {e^{ - \sigma \left( {t_k^i + \tau _k^i} \right)}}} \right)}^{{\textstyle{1 \over 2}}}}} \right)\]

The following conclusion can be proved by contradiction.
It is assumed that the Zeno behavior occurs, which means that there exists a positive constant ${t^*}$ such that $\mathop {\lim }\limits_{t \to \infty } t_k^i = {t^ * }$. Let ${\varepsilon _0} = \frac{1}{2}\tau _k^i $.
There exists a positive integer ${N_0}$ such that ${t^*} - {\varepsilon _0} \le t_k^i \le {t^*}$ for ${\varepsilon _0} > 0$ by the definition of sequence limit, where $k \ge {N_0}$.
Therefore, ${t^*} + {\varepsilon _0} \le t_k^i + 2{\varepsilon _0} \le {t_{k + 1}^i}$ holds when $k \ge {N_0}$.
This contradicts with ${t^ * } \ge {t_{k + 1}^i}$ for $k \ge {N_0}$. Thus, Zeno behavior is strictly excluded.

The proof is completed.
\end{proof}
\begin{remark}
If $k_i=0$, we call \eqref{cfhs} as a state-independent event-triggered condition. \eqref{cfhs} is named as a state-dependent event-triggered condition if $\beta  = 0$.
These two event conditions are feasible to obtain stabilizability of system \eqref{ds}.
Thus, \eqref{cfhs} can be named as hybrid trigger condition, which is universal.
\end{remark}
\subsection{Dynamic Event-Triggered Control (DETC)}
In this section, in order to improve the effect of event-triggered mechanism, we introduce dynamic variable ${\varphi _i}$ to consider the stabilizability of the system:
\begin{equation}
\begin{split}\label{zjbl}
{{\dot \varphi }_i}\left( t \right)= - {\mu _i}{\varphi _i}\left( t \right) + {\xi _i} &( {{k_i}{{\left\| {{P_i}({t_k^i})} \right\|}^2} + \beta{e^{ - \sigma t}}-{{\left\| {{e_i}(t)} \right\|}^2}}),\\
& {\Theta _i} > {\varphi _i}\left( 0 \right) > 0,{\mu _i} > 0,{\xi _i} > 0
\end{split}
\end{equation}
\begin{theorem}\label{d3}
Suppose the communication topology is a directed graph and weakly connected.
Given parameter ${k_{\max }} < {\textstyle{{\varsigma {v_1}} \over {{\rho _1}\left\| H \right\| + {v_1}\varsigma }}}$ and the first event trigger time $t_1^i = 0$, the trigger time of agent $i$ is determined by the following trigger function:
\begin{equation}
\begin{split}\label{sjcfhs2}
t_{k + 1}^i = \mathop {\max }\limits_{r \ge t_k^i} & \left\{ r:{\varphi _i}\left( t \right) \ge {\theta _i}( {{{\left\| {{e_i}} \right\|}^2} - {k_i}{{\left\| {{P_i}\left( {t_k^i} \right)} \right\|}^2} - \beta {e^{ - \sigma t}}} ), \forall t \in \left[ {t_k^i,r} \right] \right\}
\end{split}
\end{equation}
If for each iSCC cell of the follower network $\mathcal{G}_\mathcal{F}$, there exists at least one vertex $i$ such that ${b_{ij}} > 0(i \in \mathcal{F} , j \in \mathcal{R})$,
then the S-stabilizability of the multi-agent system \eqref{ds} can be realized under the event triggered protocol \eqref{sjcfhs2}, and there is no Zeno behavior in the closed-loop system.
\end{theorem}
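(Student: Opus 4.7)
The plan is to adapt the argument of Theorem \ref{dlm}, replacing the static trigger bound \eqref{wcjf} by an analogous bound that absorbs the internal variables $\varphi_i$. First I would verify that each $\varphi_i(t)$ stays strictly positive for all $t\ge 0$. Between consecutive events the triggering rule \eqref{sjcfhs2} enforces
\[
\|e_i(t)\|^2 - k_i\|P_i(t_k^i)\|^2 - \beta e^{-\sigma t} \le \frac{\varphi_i(t)}{\theta_i},
\]
and substituting this inequality into \eqref{zjbl} gives the differential inequality $\dot\varphi_i(t) \ge -(\mu_i + \xi_i/\theta_i)\varphi_i(t)$. By Lemma \ref{lemma3} one obtains $\varphi_i(t) \ge \varphi_i(0)\exp\{-(\mu_i + \xi_i/\theta_i)t\} > 0$.

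Next I would exploit the filter structure of \eqref{zjbl} to produce an integral bound on $\|e_i\|^2$. Rewriting \eqref{zjbl} as
\[
\xi_i\bigl(\|e_i(t)\|^2 - k_i\|P_i(t_k^i)\|^2 - \beta e^{-\sigma t}\bigr) = -\mu_i\varphi_i(t) - \dot\varphi_i(t),
\]
integrating over $[t_0,t]$, and using $\mu_i\int_{t_0}^t\varphi_i(s)ds \ge 0$ together with $\varphi_i(t)>0$, one arrives at
\[
\int_{t_0}^t \|e_i(s)\|^2 ds \le \int_{t_0}^t\bigl(k_i\|P_i(t_k^i)\|^2 + \beta e^{-\sigma s}\bigr)ds + \frac{\varphi_i(0)}{\xi_i}.
\]
This is the dynamic counterpart of \eqref{wcjf}, differing from it only by the finite additive constant $\varphi_i(0)/\xi_i$.

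With this estimate in hand, I would reuse the Lyapunov function $V_1 = \tilde x^T(\Psi\otimes R)\tilde x$ and the inequality \eqref{vd} from Theorem \ref{dlm}. Integrating \eqref{vd} over $[t_0,t]$ and substituting the bound above produces an inequality of the same shape as \eqref{dsc} but with an extra bounded offset $\frac{\rho_1}{v_1}\sum_{i}\varphi_i(0)/\xi_i$. Under the stated threshold $k_{\max} < \varsigma v_1/(\rho_1\|H\| + v_1\varsigma)$, both $V_1(t)$ and $\int_{t_0}^t\|\tilde x(s)\|^2 ds$ remain bounded, so Barbalat's Lemma forces $\tilde x(t)\to 0$. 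The convex-hull argument at the end of Theorem \ref{dlm} is purely algebraic in $\tilde x$ and $\mathrm{M}^{-1}B_{oj}\mathbf{1}_m$, so it carries through verbatim and yields S-stabilizability.

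For the absence of Zeno behavior I would follow the template of Theorem \ref{dl6}: the estimate $d\|e_i\|/dt \le \|A\|\|e_i\| + h_i$ in \eqref{18} is independent of the specific trigger rule and transfers directly. Under \eqref{sjcfhs2} a new event requires $\|e_i(t)\|^2 \ge k_i\|P_i(t_k^i)\|^2 + \beta e^{-\sigma t} + \varphi_i(t)/\theta_i$, which is strictly larger than the static threshold because of the positive $\varphi_i(t)$. Consequently the lower bound \eqref{52} on the inter-event interval still applies, and the contradiction argument used in Theorem \ref{dl6} excludes accumulation of events. The main obstacle I anticipate lies in the third step: keeping the additive offset $\varphi_i(0)/\xi_i$ as a bounded constant and ensuring it does not perturb the sign or magnitude conditions on $k_{\max}$ inherited from Theorem \ref{dlm}.
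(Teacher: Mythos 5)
Your proof is essentially correct, but it follows a genuinely different route from the paper for the convergence part. The paper augments the Lyapunov function to $V=V_1+V_2$ with $V_2=\sum_i\varphi_i(t)$, derives the single differential inequality $\dot V\le -k_wV+m\beta e^{-\sigma t}$, and invokes the comparison principle \eqref{g9} to get \emph{exponential} decay of $V$ and hence of $\tilde x$. You instead integrate the $\varphi_i$-dynamics \eqref{zjbl} once to convert the dynamic trigger into an $L^2$-type bound $\int_{t_0}^t\|e_i\|^2ds\le\int_{t_0}^t\bigl(k_i\|P_i(t_k^i)\|^2+\beta e^{-\sigma s}\bigr)ds+\varphi_i(0)/\xi_i$, which reduces the dynamic case to the static case of Theorem \ref{dlm} plus a finite offset, and then conclude by Barbalat. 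Your reduction is more modular and makes transparent \emph{why} the dynamic scheme inherits the same $k_{\max}$ threshold; the paper's augmented-Lyapunov route buys the stronger exponential convergence estimate, which your Barbalat argument does not recover. One point you gloss over: your integral bound still carries $\|P_i(t_k^i)\|^2$, whereas \eqref{wcjf} carries $\|P_i(s)\|^2$; to close the loop you must still write $P_i(t_k^i)=P_i(s)+e_i(s)$, re-absorb the resulting $\|e_i\|^2$ term on the left, and this is precisely where the factor $\tfrac{1}{1-k_{\max}}$ in \eqref{dsc} comes from — so the discrepancy is not "only a finite additive constant," though the repair is the same one already used in Theorem \ref{dlm}. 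For Zeno exclusion your observation that $\varphi_i(t)>0$ makes the dynamic threshold pointwise dominate the static one $\beta e^{-\sigma t}$ is correct and arguably cleaner than the paper's argument, which instead extracts a uniform lower bound $\tau_i=\tfrac{1}{\sigma}\ln\bigl(1+\tfrac{\sigma\Theta_i}{\theta_i h_i^2}\bigr)$ from the constant $\Theta_i$.
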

\begin{proof}
According to the function \eqref{zjbl} and trigger condition \eqref{sjcfhs2}, we have
\begin{equation}
\begin{split}
{{\dot \varphi }_i}\left( t \right) \ge  - {\mu _i}{\varphi _i}\left( t \right) - {\textstyle{{{\xi _i}} \over {{\theta _i}}}}{\varphi _i}\left( t \right) =  - \left( {{\mu _i} + {\textstyle{{{\xi _i}} \over {{\theta _i}}}}} \right){\varphi _i}\left( t \right)
\end{split}
\end{equation}
so
\begin{equation}\label{gs15}
\begin{split}
{\varphi _i}\left( t \right) > {\varphi _i}\left( 0 \right){e^{ - \left( {{\mu _i} + {\textstyle{{{\xi _i}} \over {{\theta _i}}}}} \right)t}} > 0.
\end{split}
\end{equation}
Constructing Lyapunov candidate function
\begin{equation}
\begin{split}
V = {V_1} + {V_2}
\end{split}
\end{equation}
where
\begin{equation*}
\begin{split}
{V_2} = \sum\nolimits_{i = 1}^N {{\varphi _i}\left( t \right)}
\end{split}
\end{equation*}
Then
\begin{equation*}
\begin{split}
{{\dot V}_2}= &\sum\nolimits_{i = 1}^N {\xi _i}\left( {k_i}{{\left\| {{P_i}\left( {t_k^i} \right)} \right\|}^2}+ \beta {e^{ - \sigma t}} - {{\left\| {{e_i}\left( t \right)} \right\|}^2} \right) -\sum\nolimits_{i = 1}^N {{\mu _i}{\varphi _i}\left( t \right)}\\
 \le & - \sum\nolimits_{i = 1}^N {{\mu _i}{\varphi _i}\left( t \right)}  - {\xi _{\max }}\left( {1 - {k_{\max }}} \right){\left\| \rm M \right\|^2}{\left\| e \right\|^2} \\
&+{\xi _{\max }}{k_{\max }}{\left\| \rm M \right\|^2}{\left\| {\tilde x} \right\|^2} + m\beta {e^{ - \sigma t}}
\end{split}
\end{equation*}
The derivative of $V$ along the trajectory \eqref{ds} is
\begin{equation}\label{40}
\begin{split}
\dot V \le &  - \varsigma {\left\| {\tilde x} \right\|^2}{\rm{ + }}\frac{{{\rho _1}}}{{{v_1}}}{\left\| e \right\|^2} - \sum\nolimits_{i = 1}^N {{\mu _i}{\varphi _i}\left( t \right)}+ m\beta {e^{ - \sigma t}} \\
& - {\xi _{\max }}\left( {1 - {k_{\max }}{{\left\| \rm M \right\|}^2}} \right){\left\| e \right\|^2}+ {\xi _{\max }}{k_{\max }}{\left\| \rm M \right\|^2}{\left\| {\tilde x} \right\|^2} \\
\le &  - \vartheta {\left\| {\tilde x} \right\|^2} - \sum\nolimits_{i = 1}^N {{\mu _i}{\varphi _i}\left( t \right)}  + m\beta {e^{ - \sigma t}}
\end{split}
\end{equation}
where
$\vartheta  = \varsigma  - {\xi _{\max }}{k_{\max }}{\left\| \rm M \right\|^2}$,
${\xi _{\max }} = {\textstyle{{{\rho _1}} \over {{v_1}\left( {1 - {k_{\max }}{{\left\| H \right\|}^2}} \right)}}}$.

Let
${k_w} = \min \left\{ {{\textstyle{\vartheta  \over {{\lambda _{\max }}\left( {\Psi  \otimes P} \right)}}},{\mu _i}} \right\} > 0$,
then
\[\dot V \le  - {k_w}V + m\beta {e^{ - \sigma t}}\]
According to the comparison principle, we have
$0 \le {V_1} \le \psi \left( t \right)$,
where $\dot \psi \left( t \right) =  - {k_w}\psi \left( t \right) + m\beta {e^{ - \sigma t}}$,
$\psi \left( 0 \right) = {V_1}\left( 0 \right)$.
Therefore, we can further obtain that
\begin{equation}\label{g9}
\begin{split}
\psi \left( t \right) = \left\{ {\begin{array}{*{20}{c}}
{{e^{ - {k_w}t}}\psi \left( 0 \right) + m\beta t{e^{ - {k_w}t}},}&{{k_w} = \sigma }\\
{{e^{ - {k_w}t}}\psi \left( 0 \right) + \frac{{m\beta }}{{\left( {{k_w} - \sigma } \right)}}\left( {{e^{ - \sigma t}} - {e^{ - {k_w}t}}} \right),}&{{k_w} \ne \sigma }
\end{array}} \right.
\end{split}
\end{equation}
Obviously, when $t \to \infty $, $\psi \left( t \right) \to 0$ holds.
Therefore, we can deduce that $V\left( t \right) \to 0$ when $t \to \infty $, i.e.
$\mathop {\lim }\limits_{t \to \infty } {{\tilde x}_i}\left( t \right) = 0$.
Moreover, system \eqref{ds} is convergent exponentially, so the S-stabilizability is solved.

Next, we prove that the event-trigger interval among agents has a strict lower bound of positive time to exclude Zeno behavior. According to \eqref{18}, one get
\begin{equation}\label{30}
\begin{split}
\frac{{d{{\left\| {{e_i}\left( t \right)} \right\|}^2}}}{{dt}}&= 2\left\| {{e_i}\left( t \right)} \right\|\frac{{d\left\| {{e_i}\left( t \right)} \right\|}}{{dt}}\\
&= 2\left\| A \right\|{\left\| {{e_i}\left( t \right)} \right\|^2}{\rm{ + }}2\left\| {{e_i}\left( t \right)} \right\|{\Gamma _i}
\end{split}
\end{equation}
In addition,
\begin{equation}\label{32}
\begin{split}
\frac{{d{{\left\| {{e_i}\left( t \right)} \right\|}^2}}}{{dt}} \le  \left( {2\left\| A \right\| + 1} \right){\left\| {{e_i}\left( t \right)} \right\|^2}  + \Gamma _i^2
\end{split}
\end{equation}
holds, because of $2\left\| {{e_i}\left( t \right)} \right\|{\Gamma _i} \le {\left\| {{e_i}\left( t \right)} \right\|^2} + \Gamma _i^2$.
With $\left\| {{e_i}\left( {t_k^i} \right)} \right\| = 0$, and ${\Gamma _i} \le {h_i}$,
integrating \eqref{32} from ${t_k^i}$ to $t$, one can obtain
that
\begin{equation}\label{33}
\begin{split}
{\left\| {{e_i}\left( t \right)} \right\|^2} \le \int_{t_k^i}^t {{e^{\sigma \left( {t - s} \right)}}h_i^2ds}  = \frac{{h_i^2}}{\sigma }\left( {{e^{\sigma \left( {t - t_k^i} \right)}} - 1} \right)
\end{split}
\end{equation}
Based on the event-triggered condition \eqref{sjcfhs2}, the following equation can be obtained
\begin{equation}\label{34}
\begin{split}
{\left\| {{e_i}\left( t \right)} \right\|^2} > \frac{{{\varphi _i}}}{{{\theta _i}}} + {k_i}{\left\| {{P_i}\left( {t_k^i} \right)} \right\|^2} + \beta {e^{ - \sigma t}} \ge \frac{{{\Theta _i}}}{{{\theta _i}}}
\end{split}
\end{equation}
Combined with the formula \eqref{33}\eqref{34},
it can be deduced that the low bound of event-trigger interval of agent $i$ is
${\tau _i} = \frac{1}{\sigma }\ln \left( {1 + \frac{{\sigma {\Theta _i}}}{{{\theta _i}h_i^2}}} \right) > 0$.
Therefore, no Zeno behavior will exhibit.
\end{proof}
\begin{remark}
Obviously, when ${\theta _i}$ is infinite, the static event-triggered condition \eqref{cfhs} can be regarded as a limit case of the dynamic trigger condition \eqref{sjcfhs2}.
\end{remark}
\section{Simulation}
In this section, for verifying the accuracy of theoretical results, we perform a series of simulation experiments on the stabilizability of systems.
Consider a group of general linear multi-agent systems with
\[A_1 = \left[ {\begin{array}{*{20}{c}}
{ 1}&1\\
2&{ - 3}
\end{array}} \right],A_2 = \left[ {\begin{array}{*{20}{c}}
{ - 1}&1\\
2&{ - 3}
\end{array}} \right],B = \left[ {\begin{array}{*{20}{c}}
{ - 1}\\
0
\end{array}} \right].\]
\begin{figure}[!h]
  \centering
  \includegraphics[width=2in]{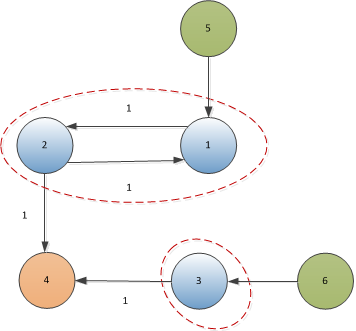}\\
  \caption{Communication topology graph}
\end{figure}
By solving Riccati inequality by MATLAB, the feedback gain matrices
$K_1 = \left[ {\begin{array}{*{20}{c}}
{ - 0.0254}&{ - 0.0012}
\end{array}} \right]$,
$K_2 = \left[ {\begin{array}{*{20}{c}}
{ - 0.1450}&{ - 0.0425}
\end{array}} \right]$ can be obtained.
Assume system \eqref{xtfc1} consists of 6 agents with
${x_{{1_0}}} = \left[ {\begin{array}{*{20}{c}}
{ - 2}&{4.3}
\end{array}} \right]$,
${x_{{2_0}}} = \left[ {\begin{array}{*{20}{c}}
7&{ - 4.5}
\end{array}} \right]$,
${x_{{3_0}}} = \left[ {\begin{array}{*{20}{c}}
{ - 4}&3
\end{array}} \right]$,
${x_{{4_0}}} = \left[ {\begin{array}{*{20}{c}}
8&2
\end{array}} \right]$,
${x_{{,5_0}}} = \left[ {\begin{array}{*{20}{c}}
2&2
\end{array}} \right]$,
${x_{{6_0}}} = \left[ {\begin{array}{*{20}{c}}
2&1
\end{array}} \right]$,
 where $\mathcal{F} = \left\{ {1, \ldots ,4} \right\}$ is follower set,
$\mathcal{R} = \left\{ {5, 6} \right\}$ is leader set.
The communication topology is described in Fig.1. According to the previous analysis,
let vertices $1$ and $3$ receive the leader's information.
\begin{figure}[!h]
  \centering
  \includegraphics[width=2.5in]{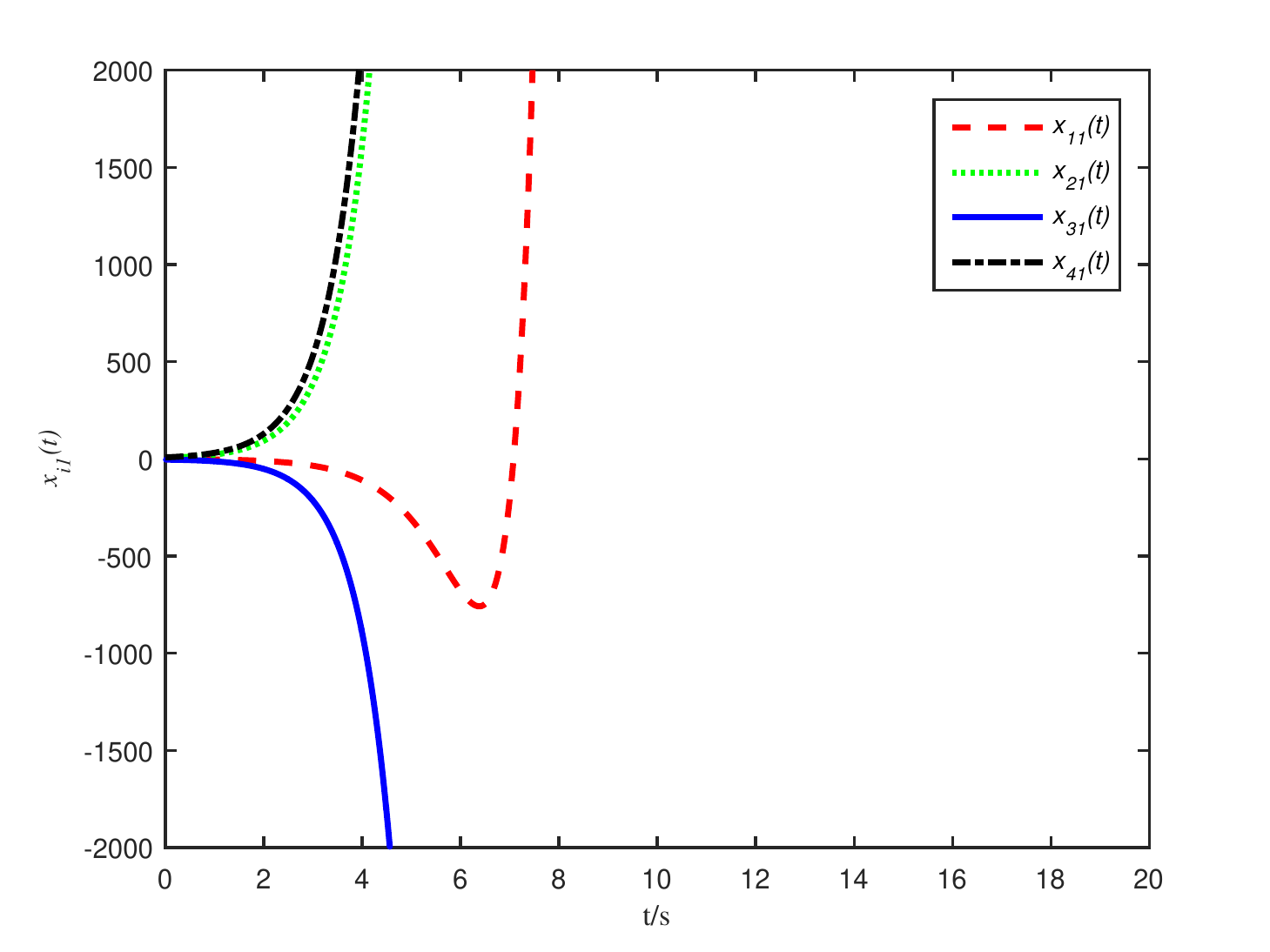}\\
  \caption{States $x_{i1}$ of followers with $A_1$ and $B$}
\end{figure}
\begin{figure}[!h]
  \centering
  \includegraphics[width=2.5in]{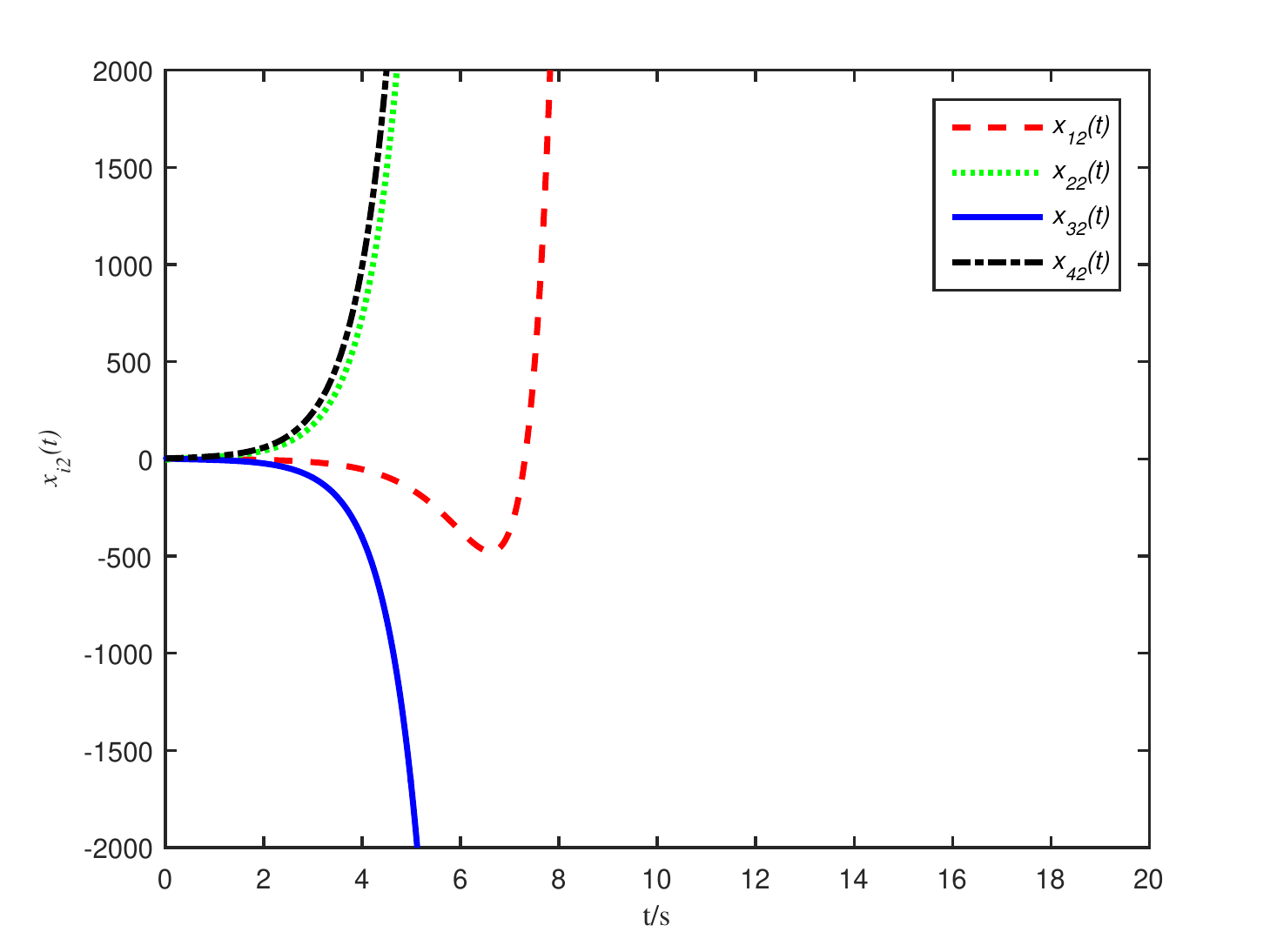}\\
  \caption{States $x_{i2}$ of followers with $A_1$ and $B$}
\end{figure}
Fig.2 and Fig.3 show the state trajectory of the followers when the system matrix is $A_1$. It can be found that the system is divergent in this case.
The corresponding renderings when the system matrix is $A_2$ are shown in Fig.4 - Fig.8.
\begin{figure}[!h]
  \centering
  \includegraphics[width=2.5in]{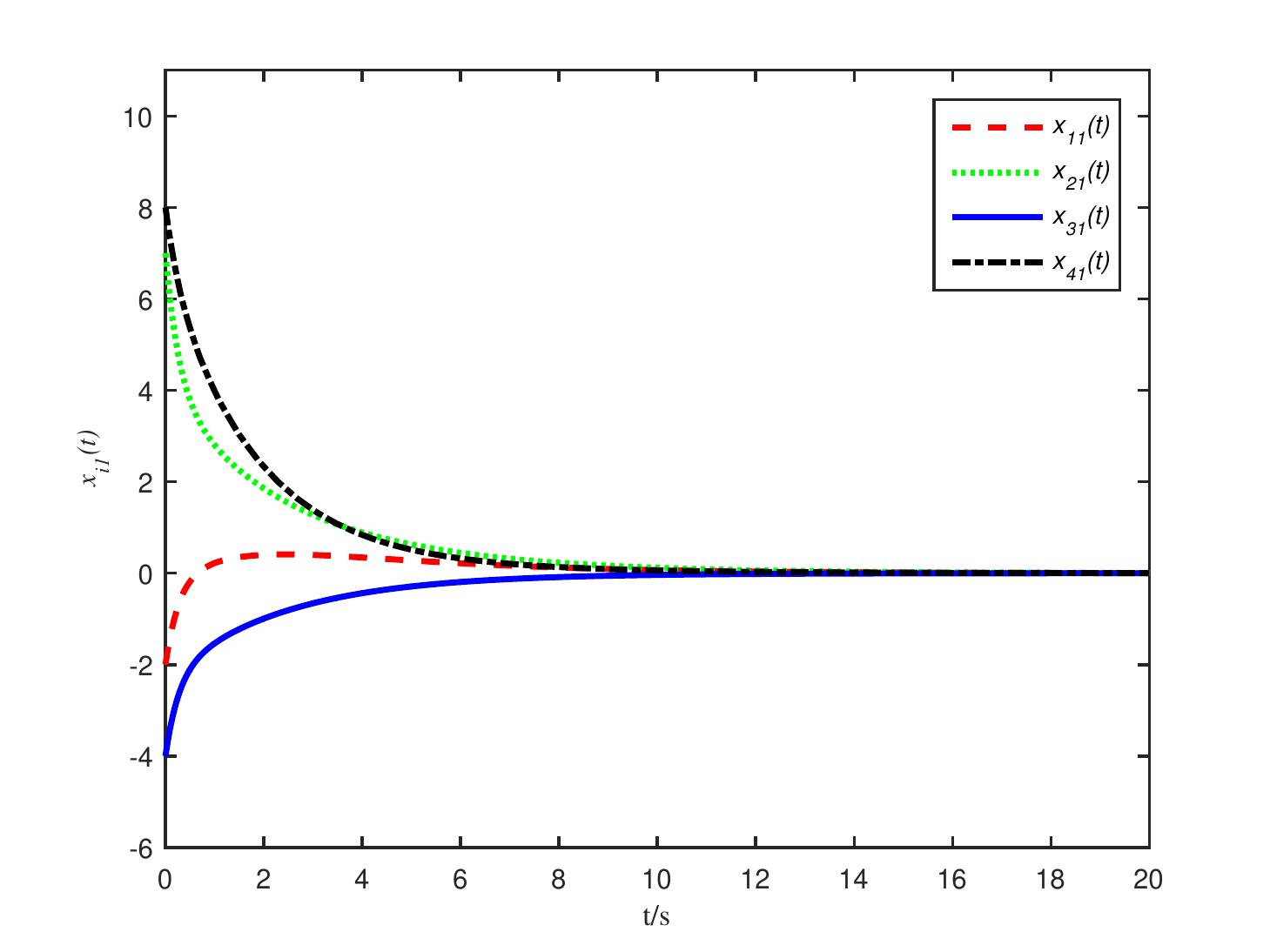}\\
  \caption{States $x_{i1}$ of followers with $A_2$ and $B$}
\end{figure}
\begin{figure}[!h]
  \centering
  \includegraphics[width=2.5in]{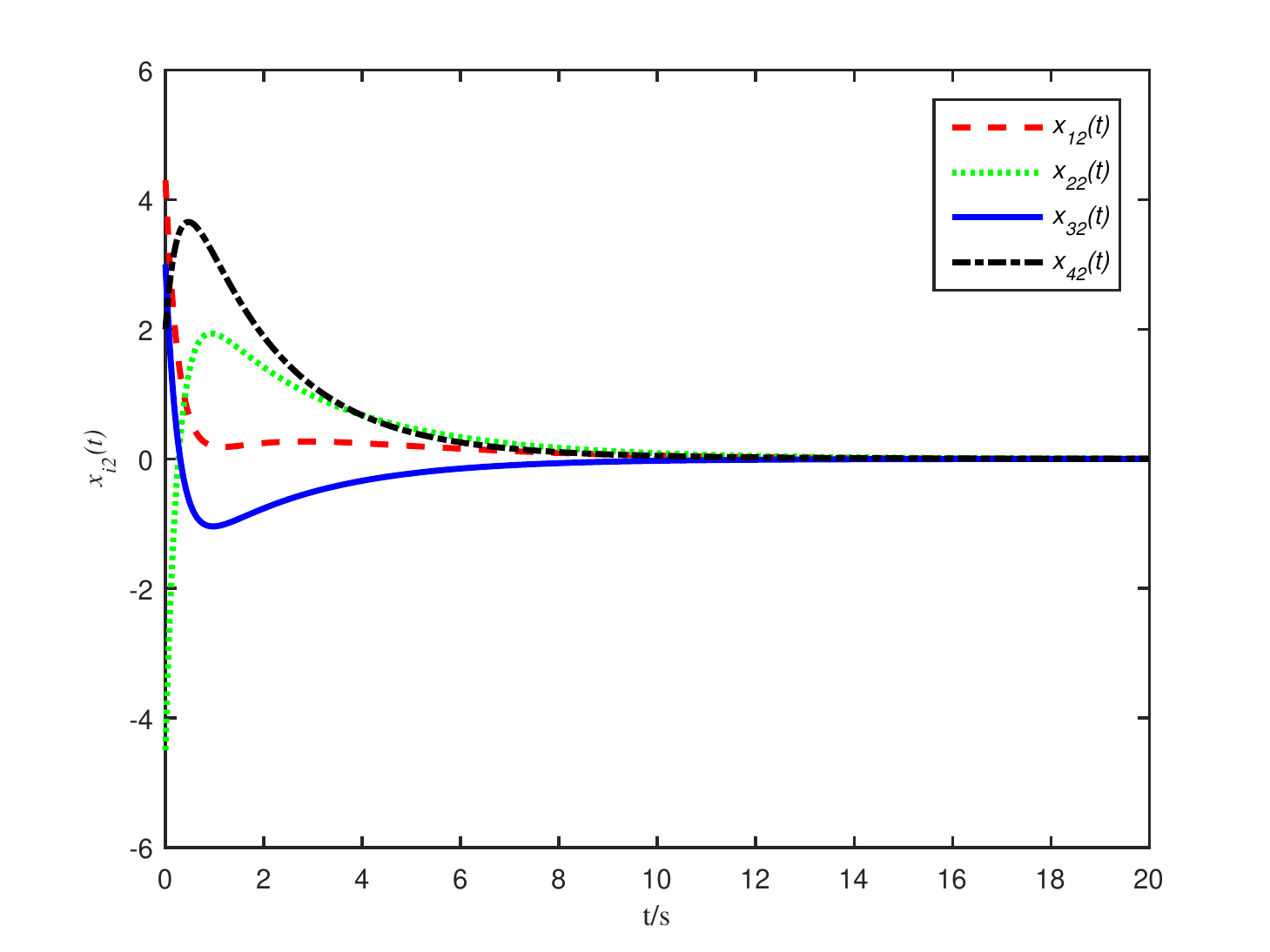}\\
  \caption{States $x_{i2}$ of followers with $A_2$ and $B$}
\end{figure}
\begin{figure}[!h]
  \centering
  \includegraphics[width=2.5in]{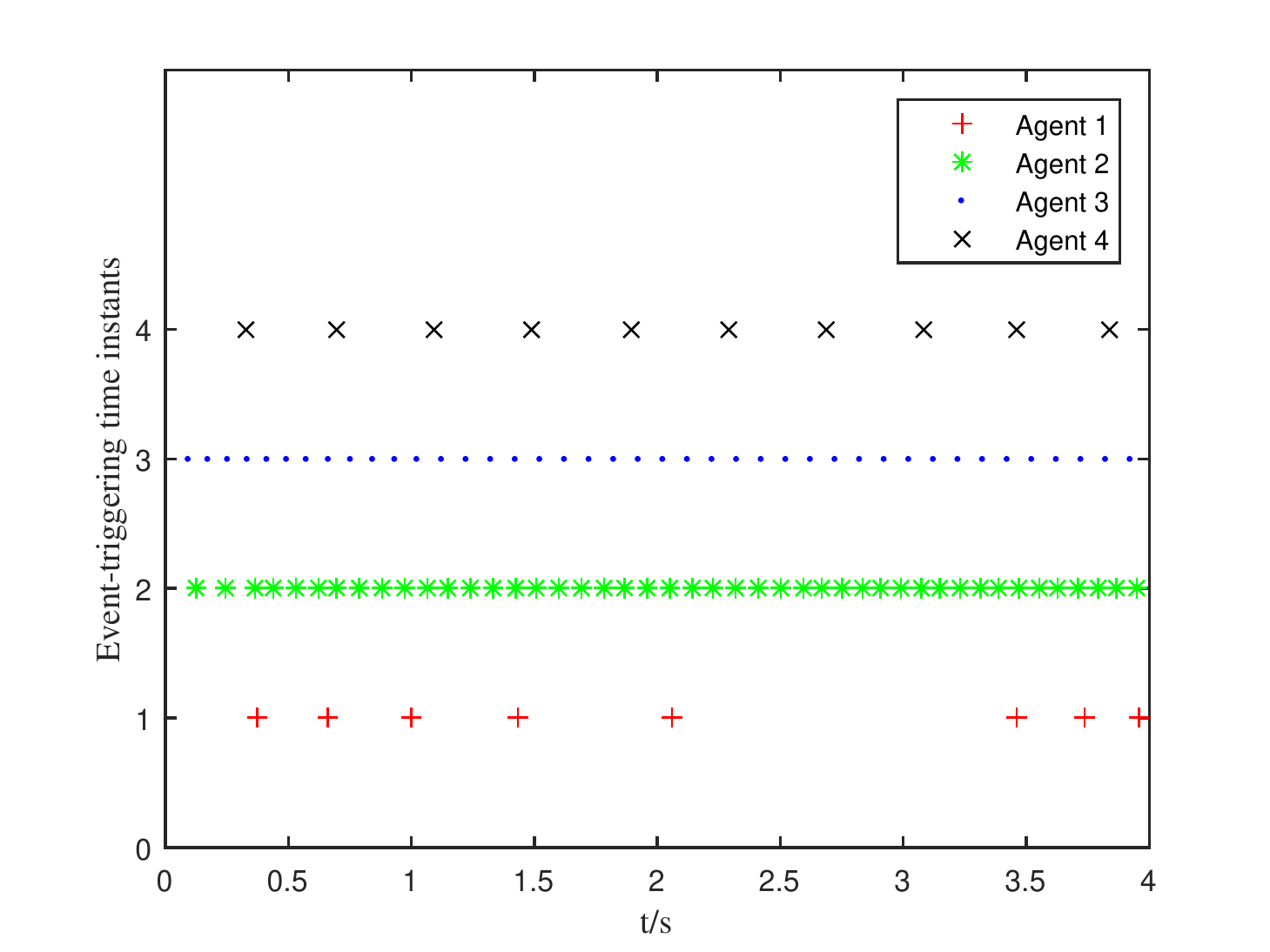}\\
  \caption{Triggering times of followers in \eqref{cfhs}}
\end{figure}
\begin{figure}[!h]
  \centering
  \includegraphics[width=2.5in]{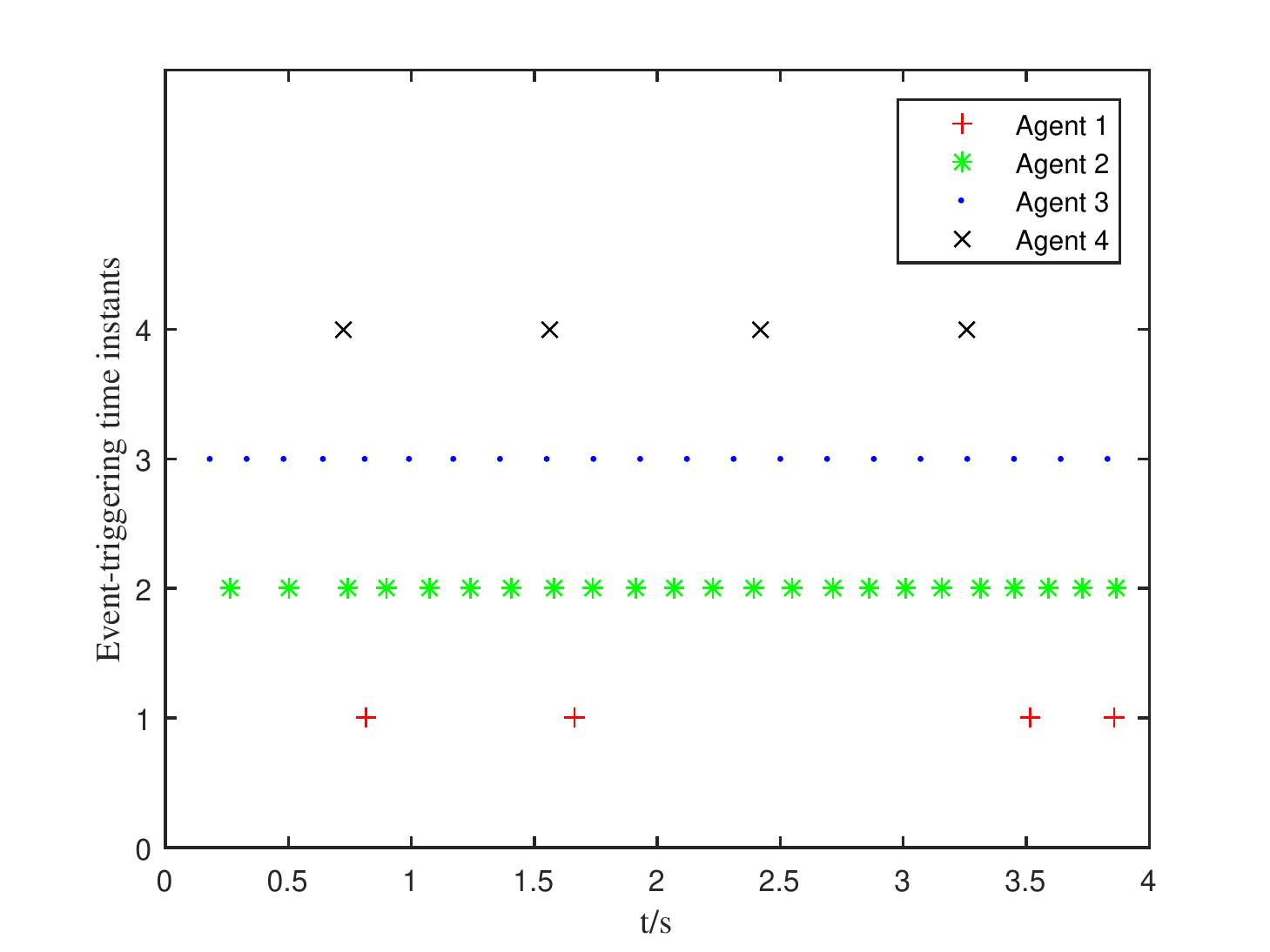}\\
  \caption{Triggering times of followers in \eqref{zjbl}}
\end{figure}
\begin{table}[htbp]
\centering
\caption{\text{The number of triggers for two event-triggered Schemes}}
 \begin{tabular}{ccccc}
  \toprule
 Type of event & Agent 1 & Agent 2 & Agent 3 & Agent 4\\
  \midrule

SETC & 524 & 638 & 401 & 245  \\
  \midrule

DETC  & 395 & 418 & 236 & 203  \\
  \bottomrule
 \end{tabular}
\end{table}
The evolutions of state ${x_i},i = 1,2,3,4,$ with $A_2$ and $B$ are shown in Fig.4 and Fig.5.
Fig.6 and Fig.7 show the event instants corresponding to two event-triggered conditions in \eqref{cfhs} and \eqref{zjbl}, respectively.
Obviously, there is no Zeno behavior.
Furthermore, in order to compare the two event-triggered law, we present the number of event triggering in TABLE I, respectively.
It can be seen that the dynamic event-triggered law guarantees a larger event interval than the static trigger law.

\section{Conclusion}
In this note, we have studied how to select control vertices to achieve the S-stabilizability of general linear multi-agent systems under event-triggered conditions by graph partition.
A new class of event-triggered protocols has been proposed for solving the S-stabilizability on directed topology.
Under this protocol, static and dynamic event-triggered conditions were proposed, respectively, and some sufficient conditions to ensure the S-stabilizability of the system were derived.
And we confirmed that stabilizability can be realized if $A$ is Hurwitz matrix.
In addition, it has been proved that the proposed static event-triggered condition is a limit case of dynamic event-trigger condition.
Future work will focus on solving the stabilizability of systems under switching topology.

\end{document}